\documentclass[aps,pre,a4paper,twocolumn,groupedaddress,showpacs,showkeys,preprintnumbers,floatfix,dvips]{revtex4}

\usepackage[utf8]{inputenc} \usepackage[T1]{fontenc}
\usepackage{psfrag} 
\usepackage{times} \usepackage{graphicx} \usepackage{color}
\usepackage{amsmath} \usepackage{amscd} \usepackage{amssymb} \usepackage{amsthm}
\usepackage{ragged2e} \usepackage{subfigure} \usepackage{caption}
\usepackage{nicefrac}
\usepackage{captcont}

\DeclareCaptionJustification{reallyjustified}{\justifying}
\captionsetup{justification=reallyjustified} \newtheorem{definition}{Def}
\newtheorem{theorem}{Theorem}
\newtheorem{corollary}{Corollary}
\hyphenation{ge-ne-ral in-te-rac-tions ele-ments
fi-gu-re cor-res-pon-ding dif-fe-rent dif-fe-ren-tly sta-tio-na-ry between
fami-ly pa-rame-ter simu-la-tions mo-ving exis-ten-ce cor-res-ponds opi-nions
va-lues}


\newcommand{\oo}{\mathrm{o}}
\newcommand{\thickhline}{\noalign{\hrule height 1pt}}

\begin{document}

\title{Connections between the Sznajd Model with General Confidence Rules and graph theory}
\author{Andr\'e M. Timpanaro} \email[]{timpa@if.usp.br} \author{Carmen P. C.
Prado} \email[]{prado@if.usp.br} \affiliation{Instituto de F\'{i}sica,
Universidade de S\~{a}o Paulo \\ Caixa Postal 66318, 05314-970 - S\~{a}o Paulo -
S\~{a}o Paulo - Brazil} \date{\today}

\pacs{02.50.Ey, 02.60.Cb, 05.45.Tp, 05.65.+b, 89.65.-s}

\begin{abstract}
The Sznajd model is a sociophysics model, that is used to model opinion propagation and consensus formation in societies. Its main feature is that its rules favour bigger groups of agreeing people. In a previous work, we generalized the bounded confidence rule in order to model biases and prejudices in discrete opinion models. In that work, we applied this modification to the Sznajd model and presented some preliminary results. The present work extends what we did in that paper. We present results linking many of the properties of the mean-field fixed points, with only a few qualitative aspects of the confidence rule (the biases and prejudices modelled), finding an interesting connection with graph theory problems. More precisely, we link the existence of fixed points with the notion of strongly connected graphs and the stability of fixed points with the problem of finding the maximal independent sets of a graph. We present some graph theory concepts, together with examples, and comparisons between the mean-field and simulations in Barab\'asi-Albert networks, followed by the main mathematical ideas and appendices with the rigorous proofs of our claims. We also show that there is no qualitative difference in the mean-field results if we require that a group of size $q>2$, instead of a pair, of agreeing agents be formed before they attempt to convince other sites (for the mean-field, this would coincide with the $q$-voter model).
\end{abstract}
\maketitle

\section{Introduction}
\label{sec:intro}

In the last years, the interest in interdisciplinary problems has increased among physicists, creating many research areas. One of these areas is sociophysics, that studies how assumptions about the behaviour and social interactions of people in a ``microcospic level'' creates emerging social behaviours, like opinion propagation, consensus formation, properties of elections, how wealth is distributed in society, among other topics. Typical approaches include modelling using deterministic celular automata, Monte Carlo simulations of models derived from ferromagnetic models (usualy Ising and Potts), mean-field approaches and diffusion-reaction processes \cite{votante-def, sznajd-def, deffuant-def, HK-def, ochrombel-def, axelrod-def, Galam-2004, Castellano-2009}.

The Sznajd model is an opinion propagation model, originally inspired by the Ising model in a linear chain, and is typically used to model consensus formation. It has spawned many variations, including the addition of noise, contrarian-like agents and undecided voters; as well as generalizations to more than 2 states (opinions) and to arbitrary networks \cite{sznajd-def, Bernardes-2002}. In all these variations, the most defining aspect of the Sznajd model is that it gives a greater convincing power to bigger groups of agreeing agents. Even though the importance of this effect has been known by psychologists since the 1950s \cite{Asch-1951}, it is often overlooked in other opinion propagation models, for the sake of simplicity (this happens for example in the voter and in the Deffuant models \cite{votante-def, deffuant-def}).

In a recent work, we took the bounded confidence rule (that roughly says that people are only allowed to change opinions in a \emph{smooth} way) that is common to many opinion propagation models \cite{deffuant-def, HK-def, Stauffer-2001}, including the Sznajd model, and we generalized it to model biases and prejudices in discrete opinion models (these generalized rules will be called by the umbrella term \emph{confidence rules}). We applied this generalization to the Sznajd model and studied mainly the case with 3 opinions. In that work, we found a good qualitative (and in some cases quantitative) agreement between the model simulated in Barab\'asi-Albert (BA) networks \cite{rede-BA-def} and the mean-field equations, but some of the results about the mean-field were still rather sketchy. In the present paper we give rigorous proofs (the main mathematical ideas are in a sepparate section and the detailed proofs can be found in the appendices) about the structure, existence and stability of the fixed points for the mean-field version of the Sznajd model with general confidence rules, finding a connection between these properties and graph theory problems using a graph derived only from qualitative properties of the confidence rule.

The results have some counterintuitive aspects and as such we provide both numerical solutions for the mean-field equations and Monte Carlo simulations for the model in a BA network. In our calculations for the mean-field, we use a variant of the model, where at each timestep we choose $q$ agents and if they agree, they attempt to convince $r$ other agents; and show that there is no qualitative difference between all the cases with $q\geq 2$ (which includes the usual definition of the Sznajd model in an arbitrary network, with an arbitrary number of opinions, as we have considered in \cite{Timpanaro-2009}).

\section{The Sznajd model with confidence rules}
\label{sec:model}

The Sznajd model is an agent based sociophysics model for opinion propagation. In this model, a society is represented by a network (that is, a collection of nodes linked together by edges), where each node represents an agent (person), each edge is a social connection (friendship, marriage, acquaintances, etc.) and each node $i$ possesses an integer $\sigma_{i}$, between 1 and $M$, representing its opinion. In our generalization of the Sznajd model, as defined in \cite{Timpanaro-2009}, we introduce a set of parameters $p_{\sigma \rightarrow \sigma'}$ (that are fixed and completely independent with the state of the network), and at each time step the following update rule is used:

\begin{itemize}
\item A node $i$ is chosen at random, and then a neighbour $j$ of $i$ is chosen.
\item If they disagree ($\sigma_i \neq \sigma_j$), nothing happens.
\item If they agree, a neighbour $k$ of $j$ is chosen and is convinced of opinion $\sigma_i$ with probability $p_{\sigma_k \rightarrow \sigma_i}$.
\end{itemize}

We can interpret the first step as a conversation between two people that know each other, where they discuss some issue. If they disagree, none manages to convince the other. But, if they agree, they may set to convince another person that one of them knows and this person is convinced with a certain probability that depends only of its current point of view and of the opinion the pair is trying to impose.

In the original model the probability weights $p_{\sigma \rightarrow \sigma'}$ are not dependent on $\sigma$ and $\sigma'$. The reason why this probability should depend on both opinions is that, usualy an opinion includes prejudices about differing points of view (this is strongly related with the idea of cognitive dissonance in psychology \cite{Festinger-cog-dis, Elster-cog-dis, Lee-cog-dis}). This generalization allows for complex interactions among the opinions in an unified way and can be seen as a generalization of the \emph{bounded confidence} rules \cite{deffuant-def, HK-def}, as those rules can be recovered as special cases. Other modifications of the model can also be obtained this way:

\begin{itemize}
\item When $p_{\sigma\rightarrow\sigma'} = 1 \,\,\forall\,\,\sigma, \sigma'$ we have the usual model.
\item If $|\sigma - \sigma'| \leq \varepsilon \Rightarrow p_{\sigma\rightarrow\sigma'} = 1$ and $p_{\sigma\rightarrow\sigma'} = 0$ otherwise, we have bounded confidence with threshold $\varepsilon$.
\item Undecided agents can be modelled by a special state $\sigma$, such that $p_{\sigma'\rightarrow\sigma} = 0 \,\,\forall\,\,\sigma'$ (undecided agents can only be convinced).
\item Cyclic interactions, like rock, paper, scissors ($A$ convinces only $B$, that convinces only $C$, that convinces only $A$).
\end{itemize}

This generalized version of the model has $M(M-1)$ parameters, where $M$ is the number of opinions ($p_{\sigma\rightarrow\sigma}$ is irrelevant and can always be taken as 0). These parameters can be thought as the elements of the adjacency matrix of a directed weighted graph, that will be refered to as \emph{confidence rule} (we will also refer to the parameters $p_{\sigma\rightarrow\sigma'}$ in this way). So the confidence rule is a directed weighted graph, whose nodes are the opinions in the model (so a model with $M$ opinions would have a confidence rule with $M$ nodes) and the arcs represent the ways that opinions are allowed to interact. This graph is useful as a way of schematizing the opinion interactions and as we show in the next sections, it can be used to find the properties of the mean-field fixed points. An example of confidence rule with 4 opinions is given in figure \ref{fig:CR}.

\begin{figure}[hbt]
\centering
\includegraphics[width=0.7\linewidth]{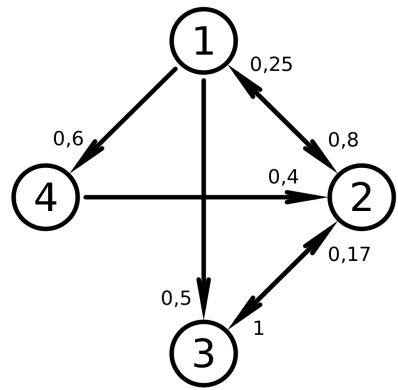}
\caption{A confidence rule for 4 opinions. Here $p_{1\rightarrow 2}=0.8$, $p_{1\rightarrow 3}=0.5$, $p_{1\rightarrow 4}=0.6$, $p_{2\rightarrow 1}=0.25$, $p_{2\rightarrow 3}=1$, $p_{3\rightarrow 2}=0.17$, $p_{4\rightarrow 2}=0.4$ and $p_{\sigma\rightarrow\sigma'}=0$ otherwise.}
\label{fig:CR}
\end{figure}

\subsection{The mean-field model}
\label{ssec:confidence}

For the analysis of the mean field case, we consider a variant of the Sznajd model, where at each timestep we choose $q$ agents at random and if they agree (meaning they are on the same state), they attempt to convince $r$ other agents (also chosen at random). If the group of $q$ agents has opinion $\sigma$, then each of the targeted $r$ agents is convinced with probability $p_{\sigma'\rightarrow\sigma}$ and retains its opinion with probability $1 - p_{\sigma'\rightarrow\sigma}$, where $\sigma'$ is the opinion the targeted agent had before the group attempted to convince it (and hence it is different for each of the $r$ agents). Adding up the probabilities of all possible processes we obtain the mean field equation in the limit of large networks:

\begin{equation}
\dot{\eta}_{\sigma} = r\sum_{\sigma'}\eta_{\sigma}\eta_{\sigma'}(\eta_{\sigma}^{q-1}p_{\sigma'\rightarrow\sigma} - \eta_{\sigma'}^{q-1}p_{\sigma\rightarrow\sigma'}),
\label{eq:mean-field}
\end{equation}
where $\eta_{\sigma}$ is the proportion of sites with opinion $\sigma$ (the deduction of this equation from the underlying Markov chain implies that $\eta$ is actually the expected value of the proportion) and a time unit corresponds to a Montecarlo timestep (MCT), that is, a number of timesteps equal to the number of sites in the network. The phase space of this flow is an $(M-1)$-simplex denoted as $Sim_M$ (that is embedded in an $M$ dimensional vector space, in order to make the equations more symmetrical), where the vertices correspond to consensus states and the other states are convex combinations of the vertices, with coeficients $\eta_{\sigma}$: 

\begin{equation}
P=\sum_{\sigma} P_{\sigma}\eta_{\sigma},
\label{eq:embedding}
\end{equation}
where $P_{\sigma}$ is the coordinate of the vertex corresponding to consensus of opinion $\sigma$ and $P$ is the coordinate in phase space of the point representing the state $(\eta_1, \ldots, \eta_M)$ (in other words, we're using a barycentric coordinate system).

The results for the mean-field fixed points can be expressed as problems regarding the existence of groups of nodes satisfying certain conditions in the confidence rule and these results are the same for all $q \geq 2$. We will give here these results for a better understanding of the simulations in section \ref{sec:simulation}, leaving the mathematical details for later. Because of the connection of these results with graph theory, a small glossary (with examples) will be useful. For the same reason, we will interchange freely the notion of a set of opinions with the notion of a set of nodes in some graph (like the confidence rule).

\subsection{Graph theory concepts and glossary}

If $G$ is a weighted graph, with adjacency matrix $G_{i\rightarrow j}$ (that is, the matrix containing the weights of the graph) and $G_{i\rightarrow j} \geq 0$, one can define its directed skeleton $\mathrm{Sk}_{dir}(G)$ as the directed graph with adjacency matrix:

\[
S_{i\rightarrow j} = \left\{\begin{array}{l}
0,\mbox{  if  }G_{i\rightarrow j} = 0\\
1,\mbox{  if  }G_{i\rightarrow j} \neq 0
\end{array}
\right..
\]

\noindent An example of skeleton is given in figure \ref{fig:skeletons}.

\begin{figure}[hbt]
\centering
\includegraphics[width=0.5\linewidth]{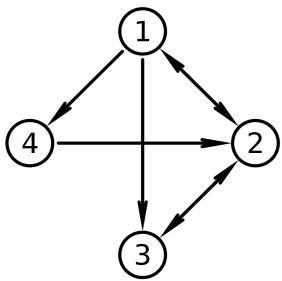}
\caption{The skeleton for the confidence rule in figure \ref{fig:CR}.}
\label{fig:skeletons}
\end{figure}

Let now $\Delta$ be a set of nodes in a directed graph $\mathcal{S}$ (typically in our problems, $\Delta$ will be a set of opinions and $\mathcal{S}$ will be the skeleton of the confidence rule). We define the following terms (we will use in the examples $\mathcal{S}$ equal to the graph in figure \ref{fig:skeletons}):

\begin{itemize}
\item The predecessor set of $\Delta$, denoted by $\Delta_{-}$, is the set of nodes in $\mathcal{S}$ that point to some node in $\Delta$. For example, $\{3\}_{-} = \{1,4\}_{-} = \{1,2\}$ and $\{2\}_{-} = \{1,3,4\}$.
\item Analogously, the successor of $\Delta$, denoted by $\Delta_{+}$, is the set of nodes in $\mathcal{S}$ that are pointed by nodes in $\Delta$. For example, $\{2,3\}_{+} = \{1,2,3\}$.
\item The complement of $\Delta$, $\overline{\Delta}$ is the set of nodes in $\mathcal{S}$ that are not in $\Delta$. For example, $\overline{\{2,3\}} = \{1,4\}$.
\item $\Delta$ is an independent set iff $\mathcal{S}$ has no connections among nodes in $\Delta$. $\{1\}$ and $\{3,4\}$ are independent sets. Note that if $\Delta$ is independent, it follows that $\Delta_{-}, \Delta_{+} \subseteq \overline{\Delta}$.
\item An independent set $\Delta$ is maximal if it contains all the nodes in the graph or if the addition of any node not in $\Delta$ destroys independence. $\{3,4\}$ is a maximal independent set, while $\{3\}$ is independent but not maximal.
\item If $\Delta$ is a set of nodes from $\mathcal{S}$, then the graph induced by $\Delta$, $\mathcal{S}_{\Delta}$ is the graph whose set of nodes is $\Delta$ and whose connections are the connections between the elements of $\Delta$ that existed in $\mathcal{S}$. The graph $\mathcal{S}_{\{2,3,4\}}$ can be found in figure \ref{fig:induced}.
\item The union of 2 graphs $G$ and $H$, denoted $G\cup H$ is the graph with all the nodes of $G$ and $H$, but only connections that already existed between $G$ and $H$ (in short it means referring to 2 unrelated graphs as parts of the same graph, without changing anything else). The graph $\mathcal{S}_{\{1,2,3\}}\cup\mathcal{S}_{\{4\}}$ is shown in figure \ref{fig:union}. Also, the more familiar concept of component can be defined as a graph that is not the union of any smaller parts and is also not part of a larger graph with the same property.
\item A graph is strongly connected if we can start at any node and get to any other node, respecting the directions of the arcs. $\mathcal{S}$, $\mathcal{S}_{\{4\}}$ and $\mathcal{S}_{\{1,2\}}$ are strongly connected, but $\mathcal{S}_{\{1,3\}}$ is not because there is no path from 3 to 1 in it.
\end{itemize}

\begin{figure}[hbt]
\centering
\includegraphics[width=0.5\linewidth]{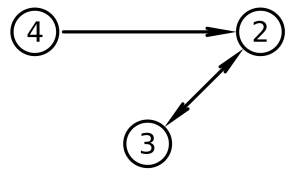}
\caption{The graph induced in $\mathcal{S}$ by the set $\{2,3,4\}$.}
\label{fig:induced}
\end{figure}

\begin{figure}[hbt]
\centering
\includegraphics[width=0.5\linewidth]{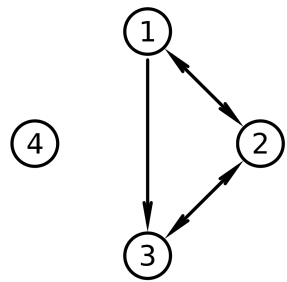}
\caption{An example of graph union. The graph $\mathcal{S}_{\{1,2,3\}}\cup\mathcal{S}_{\{4\}}$.}
\label{fig:union}
\end{figure}

Finaly, we will denote by $\mathcal{M}_{\Delta}$ the manifold with the states where only opinions in $\Delta$ survive:
\begin{equation}
\mathcal{M}_{\Delta} = \left\{\vec{\eta}\in Sim_M\,\,\left|\,\,\sum_{\sigma\in\Delta}\eta_{\sigma} = 1\right.\right\}.
\label{eq:M_Delta}
\end{equation}

\subsection{Mean-field results}

Let $\mathcal{R}$ be the skeleton of the confidence rule. The results for the mean-field fixed points are:

\begin{itemize}
\item Given a fixed point, its stability properties depend only on which opinions survive in it and on the skeleton of the confidence rule.

\item There exists a fixed point, where all opinions survive, iff $\mathcal{R}$ is a union of strongly connected graphs. Moreover, if $\mathcal{R}$ itself is strongly connected, this point is unique and an unstable node (the only exception is the case with 1 opinion, when the fixed point is the only point in the phase space). This is equivalent to the more intuitive statement that there exists a fixed point where all opinions coexist iff we cannot build a set $\Delta$ of opinions that can convince opinions in $\overline{\Delta}$, but cannot be convinced by any opinion in $\overline{\Delta}$.

\item The results concerning only opinions in a set $\Delta$ (the fixed points and the stabilities inside $\mathcal{M}_{\Delta}$) can be found using the model defined by the confidence rule $\mathcal{R}_{\Delta}$. (in other words, removing opinions from the model leaves us with a model with a different confidence rule, that is valid inside of $\mathcal{M}_{\Delta}$).

\item If $\mathcal{R}$ can be split in 2 independent models, that is, $\mathcal{R} = \mathcal{R}_{\Delta} \cup \mathcal{R}_{\overline{\Delta}}$. Then for all $\vec{\eta}\in \mathcal{M}_{\Delta}$ and $\vec{\nu}\in \mathcal{M}_{\overline{\Delta}}$, fixed points of the models with rules $\mathcal{R}_{\Delta}$ and $\mathcal{R}_{\overline{\Delta}}$ respectively, all the points $\alpha \vec{\eta} + (1-\alpha) \vec{\nu}$ with $0<\alpha<1$ are fixed points of the model with rule $\mathcal{R}$. Moreover, the number of unstable directions and stable directions is the respective sum of the numbers for $\vec{\eta}$ and $\vec{\nu}$ (when considering only directions inside $\mathcal{M}_{\Delta}$ and $\mathcal{M}_{\overline{\Delta}}$). The same thing is true for the neutral directions along which there is movement, but fails for the ones with no movement (we must add $\vec{\eta} - \vec{\nu}$ as an extra direction in this case).

\item A fixed point where only opinions in $\Delta$ survive is attractive iff $\Delta_{-} = \overline{\Delta}$. This also implies that $\Delta$ is a maximal independent set (see appendix \ref{ap:delta-maximal}) and hence that $\mathcal{M}_{\Delta}$ is an attractor.
\end{itemize}

These results have some interesting consequences and interpretations, that should be kept in mind when analysing the simulation results.

\begin{itemize}
\item The mean-field has no stable situations where 2 interacting opinions coexist. This means that all possible (static) attractors are of the form $\mathcal{M}_{\Delta}$, where $\Delta$ is a maximal independent set.
\item The requirement that $\Delta$ be maximal for $\mathcal{M}_{\Delta}$ to be an attractor allows the existence of attractors with surviving opinions that do not convince any opinions at all.
\item The condition $\Delta_{-} = \overline{\Delta}$ implies that it is possible to build confidence rules that have no such attractors. These rules display heteroclinic cycles (see appendix \ref{ap:heteroclinic}), which cause oscilations with diverging period and are heavily affected by finite size effects during simulations.
\item If every opinion can convince any other (that is $p_{\sigma\rightarrow\sigma'} \neq 0$ for all $\sigma\neq\sigma'$), then the consensus states are the only attractors.
\item In situations where part of the confidence rule can be broken in different components, by the removal of some opinions, there are manifolds where all points are fixed points, and these manifolds can be analised by putting together the analysis of each of the components.
\end{itemize}

\section{Simulation Results and Examples}
\label{sec:simulation}
For our simulations, we used Barab\'asi-Albert networks with $10^5$ sites and minimal connectivity equal to 5 (we used different networks, but always with these same parameters).

In order to compare trajectories obtained by simulations with trajectories obtained by integrating equations (\ref{eq:mean-field}) we recall that $\eta_{\sigma}$ is the expected value of the proportion of sites with opinion $\sigma$. Because of this and in order to reduce noise, we take averages over many simulations (one can also reduce noise by choosing a larger network size). More importantly, if the initial condition for the mean-field equations is $(\eta_1, \ldots, \eta_M)$, then this means that for the corresponding simulations, each site must have its opinion chosen at random with probability $\eta_{\sigma}$ for opinion $\sigma$.

The simulations we will do will be aimed at giving examples of the mean-field results from section \ref{ssec:confidence}, some of their counter-intuitive aspects and some divergences between the simulations and the mean-field.

\subsection{Attractors and stability}
\label{ssec:sim-attractors}

To illustrate the results about the stability properties of the fixed points, consider the rule $\mathcal{R}$, depicted in figure \ref{fig:attr} (actualy, a family of confidence rules). The maximal independent sets are $\Delta = \{1,2\}, \{1,5\}, \{3\}$ and $\{4\}$, but only $\{1,2\}$ and $\{1,5\}$ obey $\Delta_{-} = \overline{\Delta}$, meaning that the only stationary attractors are $\mathcal{M}_{\{1,2\}}$ and $\mathcal{M}_{\{1,5\}}$. After a transient we see one of 2 situations, the only surviving opinions will be 1 and 2 or they will be 1 and 5. We can see this from the time series of $\eta_{1} + \eta_{2} + \eta_{5}$ (it tends to 1) and $\eta_{2}.\eta_{5}$ (it tends to 0, although with a longer transient) (figures \ref{fig:attr-series-a} and \ref{fig:attr-series-b}).

\begin{figure}[hbt!]
\centering
\includegraphics[width=0.5\linewidth]{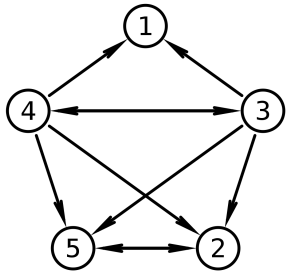}
\caption{The skeleton $\mathcal{R}$, of a confidence rule where 2 of the 4 maximal independent sets generate attractors. These static attractors can all be obtained by solving $\Delta_{-} = \overline{\Delta}$ in the rule, as pointed in section \ref{ssec:confidence}, and are $\mathcal{M}_{\{1,2\}}$ and $\mathcal{M}_{\{1,5\}}$.}
\label{fig:attr}
\end{figure}

\begin{figure}[htb!]
\centering
\subfigure[$\eta_{1} + \eta_{2} + \eta_{5}$] {\includegraphics[width=0.7\linewidth]{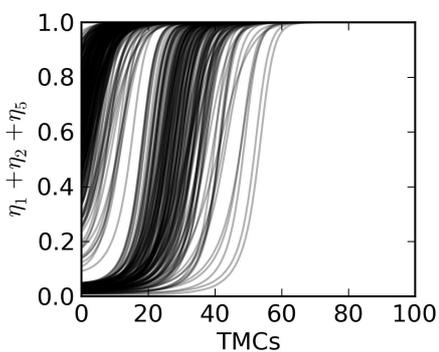}\label{fig:attr-series-a}} \\
\subfigure[$\eta_{2} . \eta_{5}$] {\includegraphics[width=0.7\linewidth]{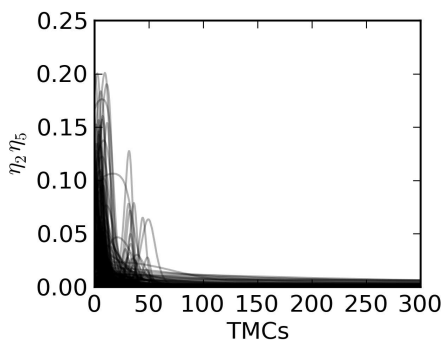}\label{fig:attr-series-b}}
\caption{Time series for the rule in figure \ref{fig:attr} with weights either 0 or 1, depicting the attractors. Note that in figure \ref{fig:attr-series-a} the ending value is 1 (meaning that opinions 3 and 4 do not survive). In figure \ref{fig:attr-series-b} the ending value is 0, showing that opinion 2 and 5 don't survive at the same time.}
\end{figure}

The other fixed points can be found by looking at the other induced graphs that are unions of strongly connected graphs. They are $\mathcal{R}_{\{3\}}$, $\mathcal{R}_{\{4\}}$, $\mathcal{R}_{\{3, 4\}}$ and $\mathcal{R}_{\{1,2,5\}}$. Note that $\mathcal{R}_{\{1,2,5\}} = \mathcal{R}_{\{1\}} \cup \mathcal{R}_{\{2,5\}}$ and that both components are strongly connected. This means that we will actualy have a line of fixed points connecting some point in the edge $\overline{P_2P_5}$ to the vertex $P_1$. The stability properties of these points are in table \ref{tab:fixed-points}. A projection of the phase space (where the weights in the confidence rule where taken as 0 or 1), showing the attractors and the features described in this table can be found in figure \ref{fig:attr-phase-space}.

\begin{table}[hbt!]
\centering
\begin{tabular}{|c|c|c|c|c|c|c|}
\hline
$\Delta$ & $\rule{0pt}{12pt}\overline{\Delta}$ & $\Delta_{-}$ & $\Delta_{+}$ & $u$ & $s$ & $n$ \\ \thickhline
3             & 1,2,4,5 & 4       & 1,2,4,5   & 3 & 1 & 0 \\ \hline
4             & 1,2,3,5 & 3       & 1,2,3,5   & 3 & 1 & 0 \\ \hline
3,4           & 1,2,5   & 3,4     & 1,2,3,4,5 & 4 & 0 & 0 \\ \hline
$1\times 2$,5 & 3,4     & 2,3,4,5 & 2,5       & 1 & 2 & 1 \\ \hline
\end{tabular}
\caption{The fixed points of the rule in figure \ref{fig:attr} that are not in attractors, denoted by the opinions that survive in them ($\Delta$). The line of fixed points connecting the edge $\overline{P_2P_5}$ to the vertex $P_1$ is denoted by 1$\times$2,5. For each fixed point, we list the number of unstable, stable and neutral directions ($u$, $s$ and $n$ respectively). The relation of these numbers with the sets $\overline{\Delta}, \Delta_{-}$, $\Delta_{+}$ and the number of components induced by $\Delta$ is given in section \ref{ssec:stability}.}
\label{tab:fixed-points}
\end{table}

\begin{figure}[hbt!]
\centering
\includegraphics[width=\linewidth]{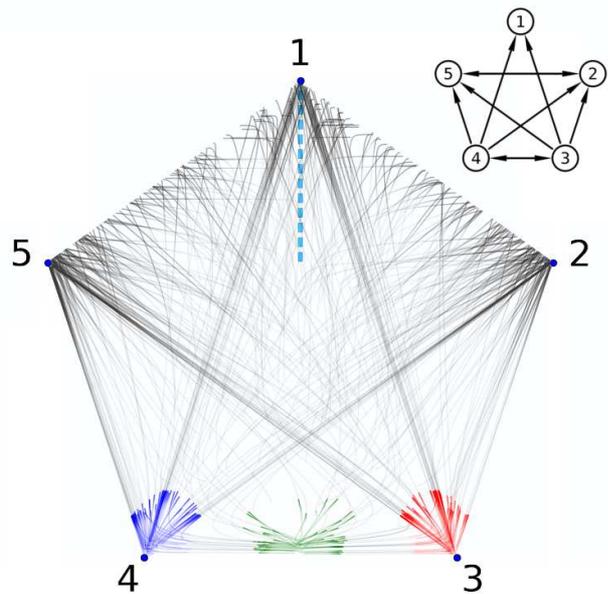}
\caption{(color online) Phase space projection,
depicting the structures described in table \ref{tab:fixed-points} and the attractors. On the top right we see a reordering of the skeleton of the rule making the independent sets more evident. The cyan dashed line shows the location of the saddle points ($1\times 2$,5 ; in table \ref{tab:fixed-points}), the blue shaded trajectories are passing near the point (4), the red ones near the point (3) and the green ones near the point (3,4). For all these trajectories and the gray ones, lighter shades indicate the beginning of the trajectories and darker shades indicate their ending. We can see then the trajectories going to the attractors $\mathcal{M}_{\{1,2\}}$ and $\mathcal{M}_{\{1,5\}}$, with some being at first attracted by the saddles in ($1\times 2$,5) before being repelled. We can also see the predicted stable direction for the fixed points (3) and (4) and the fact that (3,4) is an unstable node.}
\label{fig:attr-phase-space}
\end{figure}

\subsection{Surviving inert opinions}
\label{ssec:sim-inert}

Next, we consider two examples in which we have opinions that survive in an attractor, but don't convince any other opinion (we'll call them inert). Consider the rules $\mathcal{R}_1$ and $\mathcal{R}_2$ given in figure \ref{fig:iner}. In $\mathcal{R}_1$, $\mathcal{M}_{\{1,3\}}$ is an attractor, even though opinion 1 is inert (can't convince any of the others). In $\mathcal{R}_2$, $\mathcal{M}_{\{4,5\}}$, $\mathcal{M}_{\{1,2\}}$ and $\mathcal{M}_{\{2,3\}}$ are attractors, even though opinion 2 is inert.

\begin{figure}[hbt!]
\centering
\subfigure[$\mathcal{R}_1$]{
\includegraphics[width=0.4\linewidth]{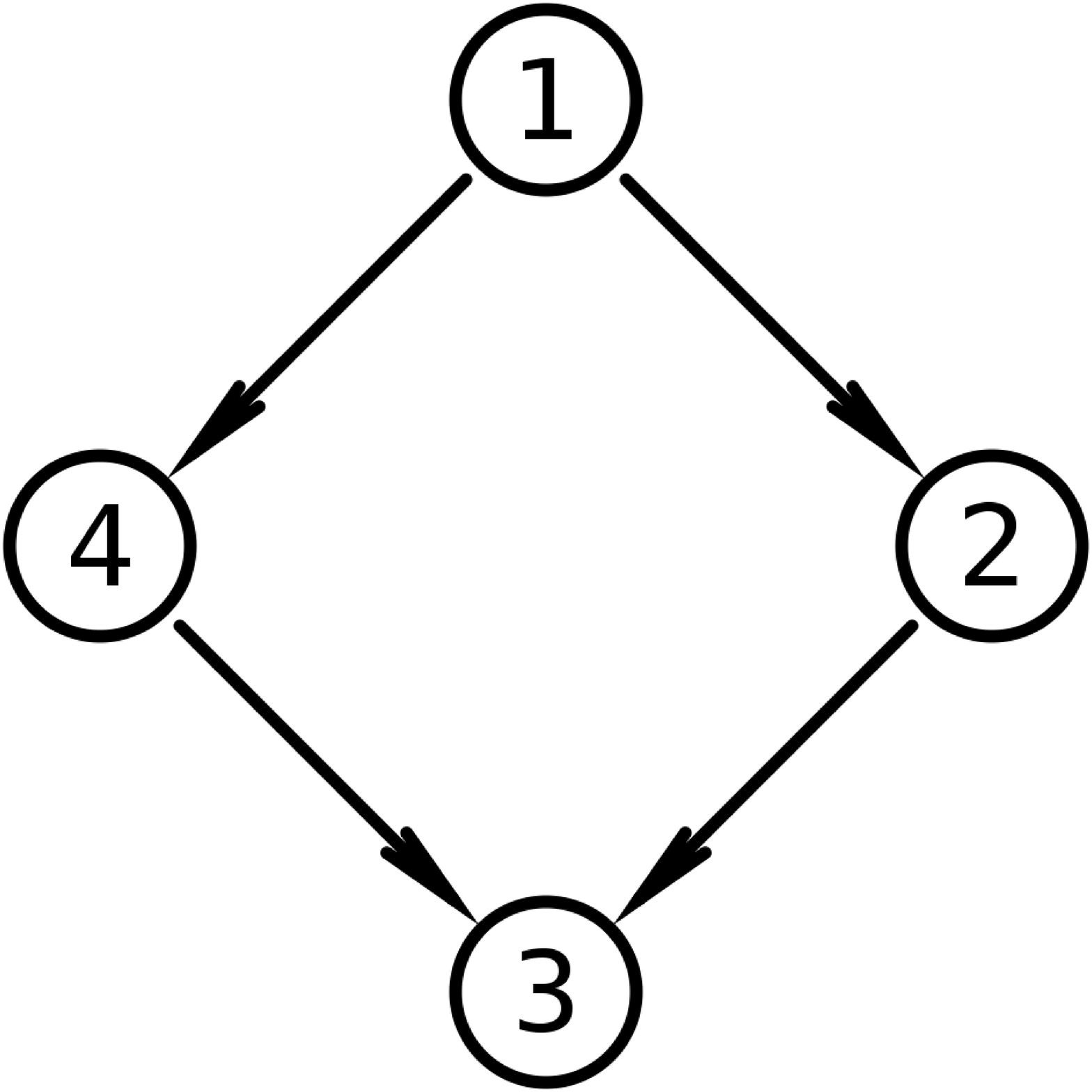}}
\subfigure[$\mathcal{R}_2$]{
\includegraphics[width=0.4\linewidth]{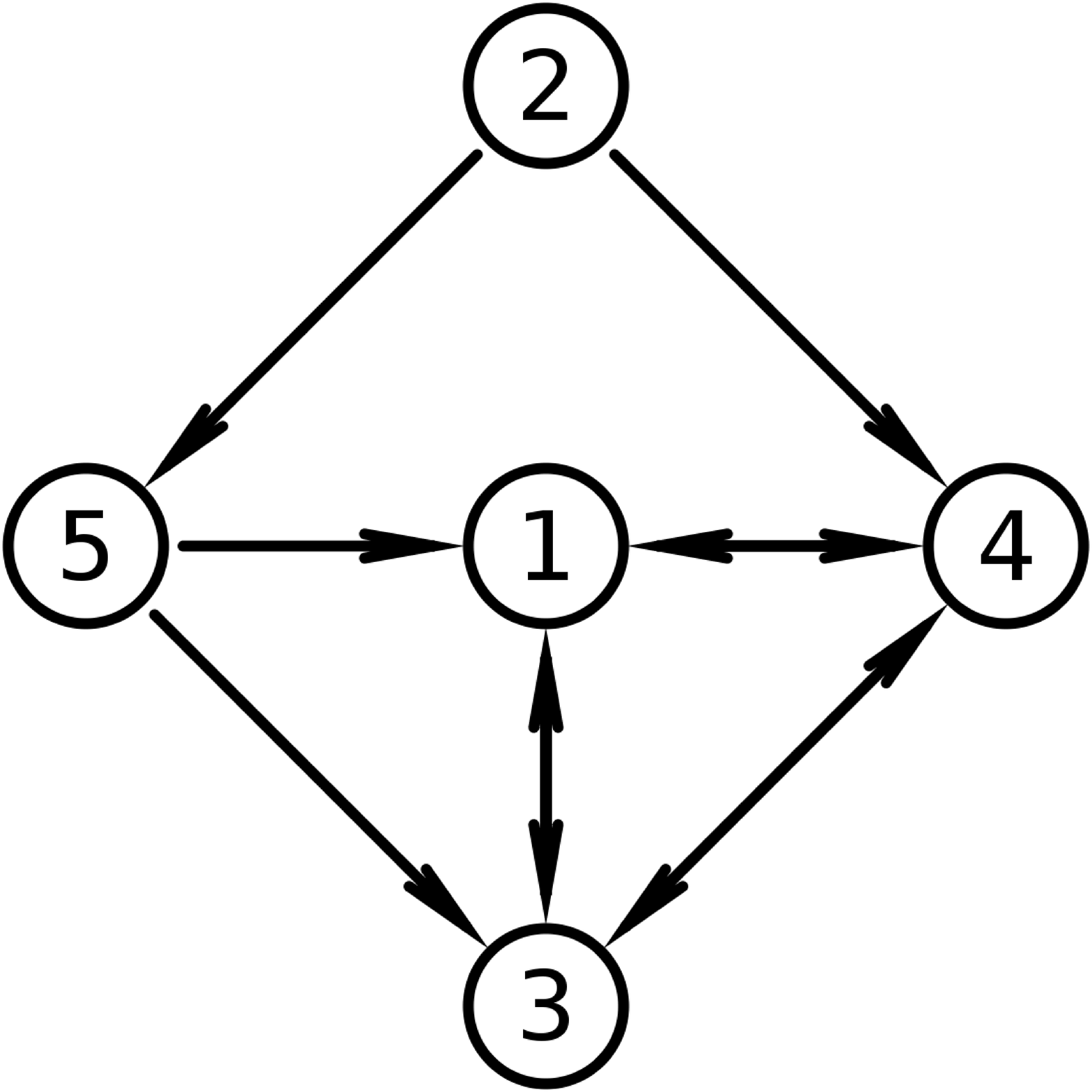}}
\caption{The skeleton of two confidence rules $\mathcal{R}_1$ and $\mathcal{R}_2$, such that inert opinions are able to survive in the stationary state.}
\label{fig:iner}
\end{figure}

We now check that this effect is present in the simulations. Time series for the models with confidence rules $\mathcal{R}_1$ and $\mathcal{R}_2$ (once again, the weights are taken as 0 or 1) are given in figures \ref{fig:quad_iner-att} to \ref{fig:pen_iner-iner}.

\begin{figure}[hbt!]
\centering
\subfigure[$(\mathcal{R}_1)$ $\eta_{1} + \eta_{3}$] {\includegraphics[width=0.7\linewidth]{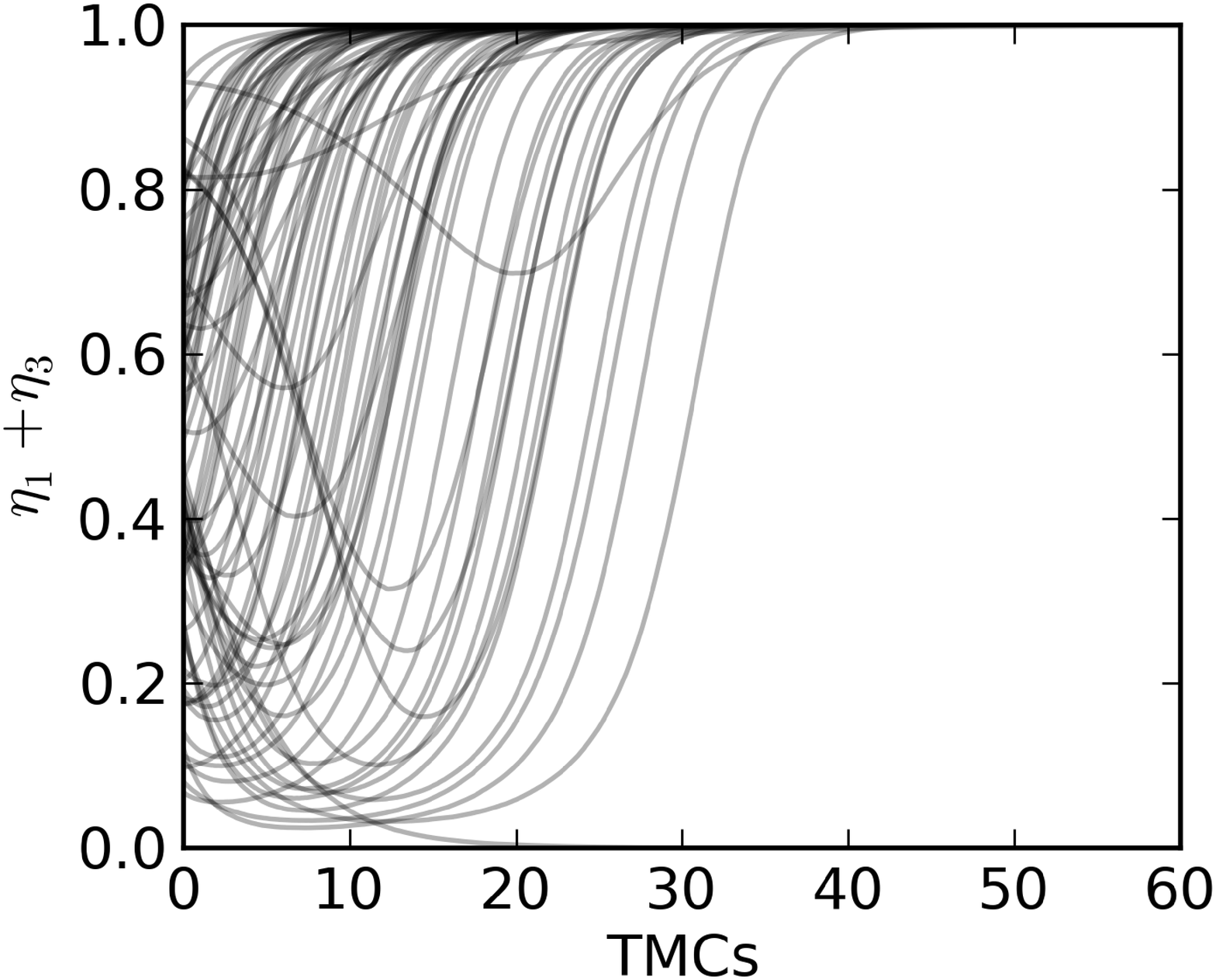}\label{fig:quad_iner-att}}
\captcont{Time series for the rules $\mathcal{R}_1$ and $\mathcal{R}_2$.}
\end{figure}

\begin{figure}[hbt!]
\centering
\subfigure[$(\mathcal{R}_1)$ $\eta_{1}$] {\includegraphics[width=0.7\linewidth]{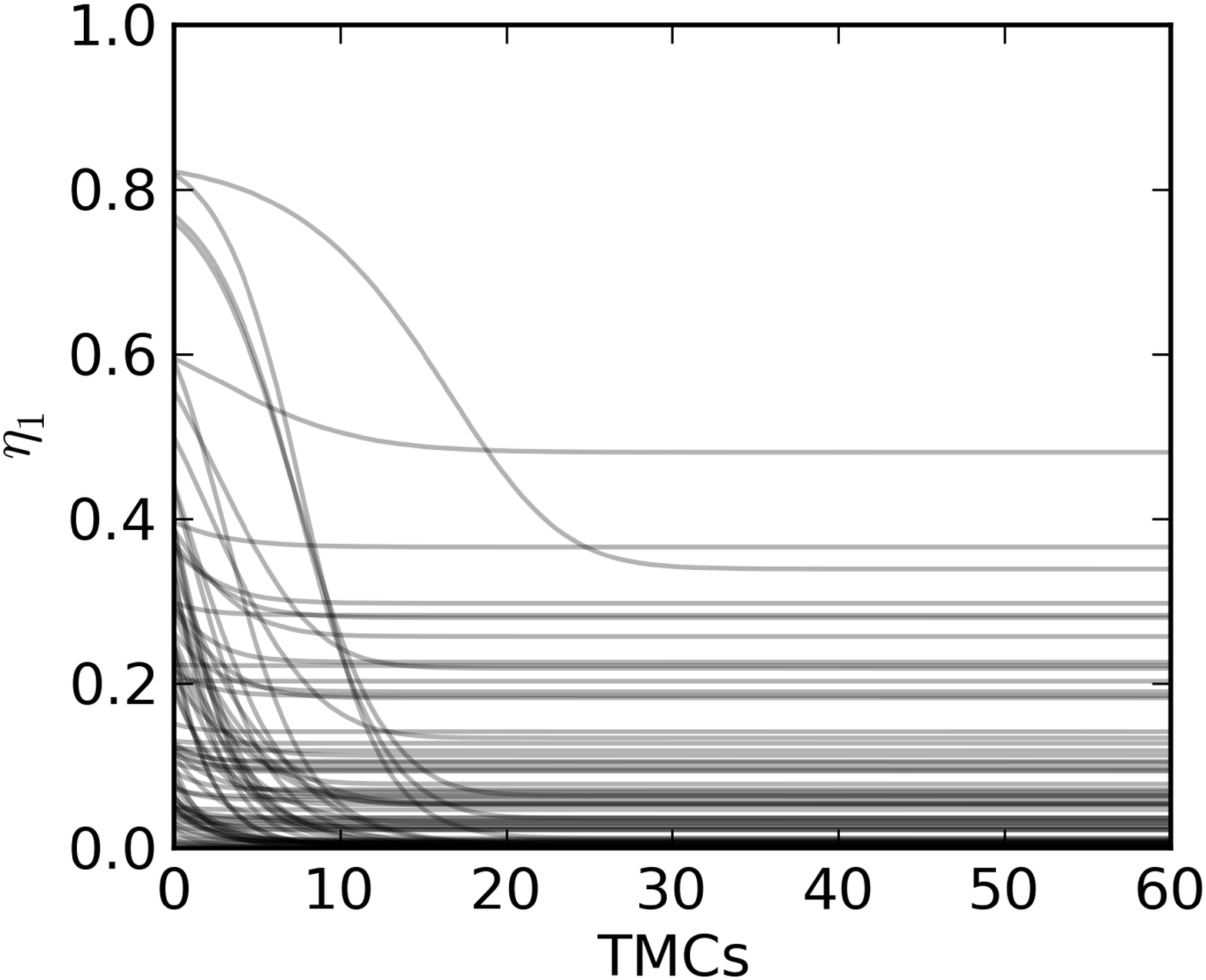}\label{fig:quad_iner-iner}}\\
\captcont*{(cont.)}
\end{figure}

\begin{figure}[hbt!]
\centering
\subfigure[$(\mathcal{R}_2)$ $\eta_{1} + \eta_2 + \eta_{3}$] {\includegraphics[width=0.7\linewidth]{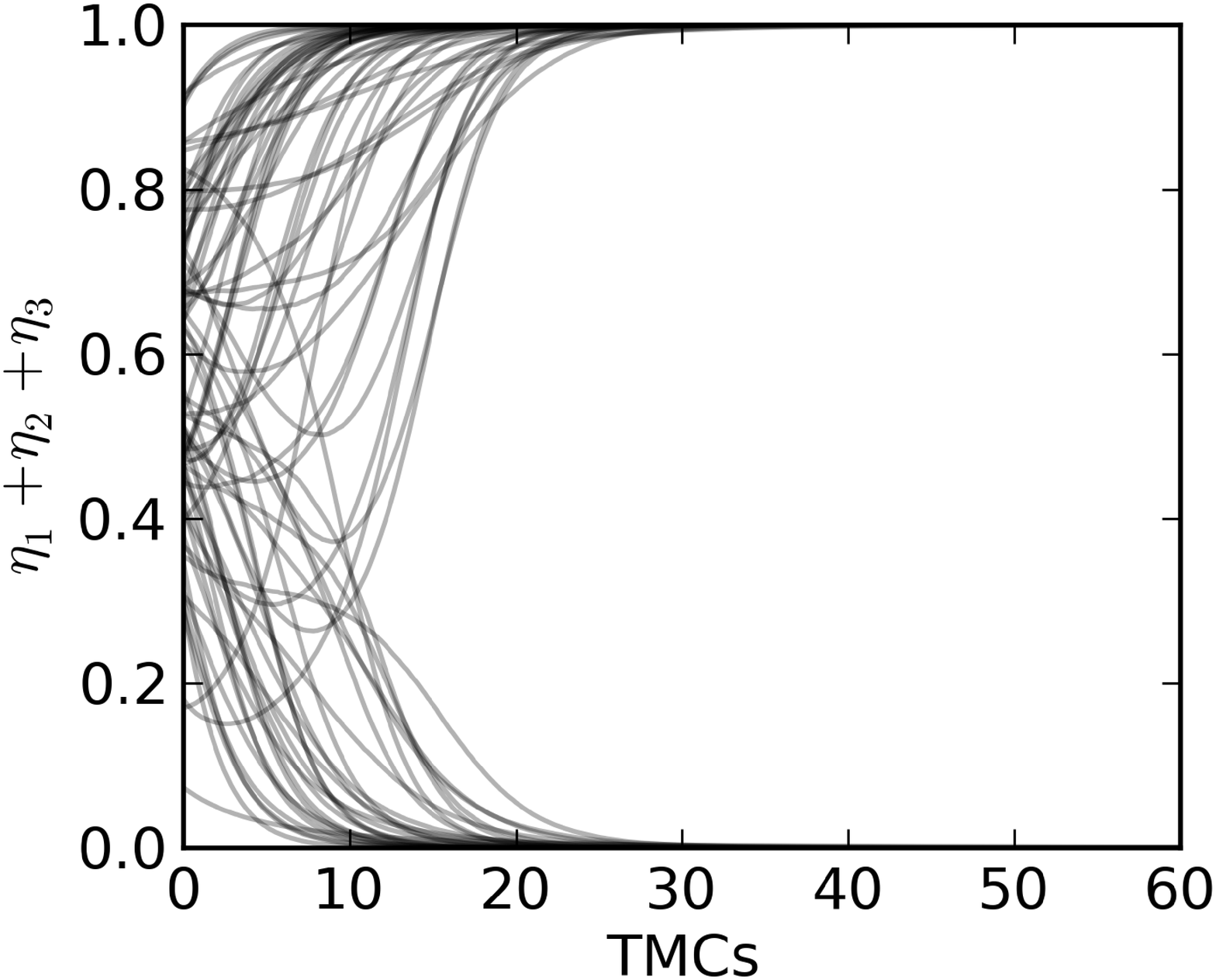}\label{fig:pen_iner-att}}\\
\captcont*{(cont.)}
\end{figure}

\begin{figure}[hbt!]
\centering
\subfigure[$(\mathcal{R}_2)$ $\eta_2$] {\includegraphics[width=0.7\linewidth]{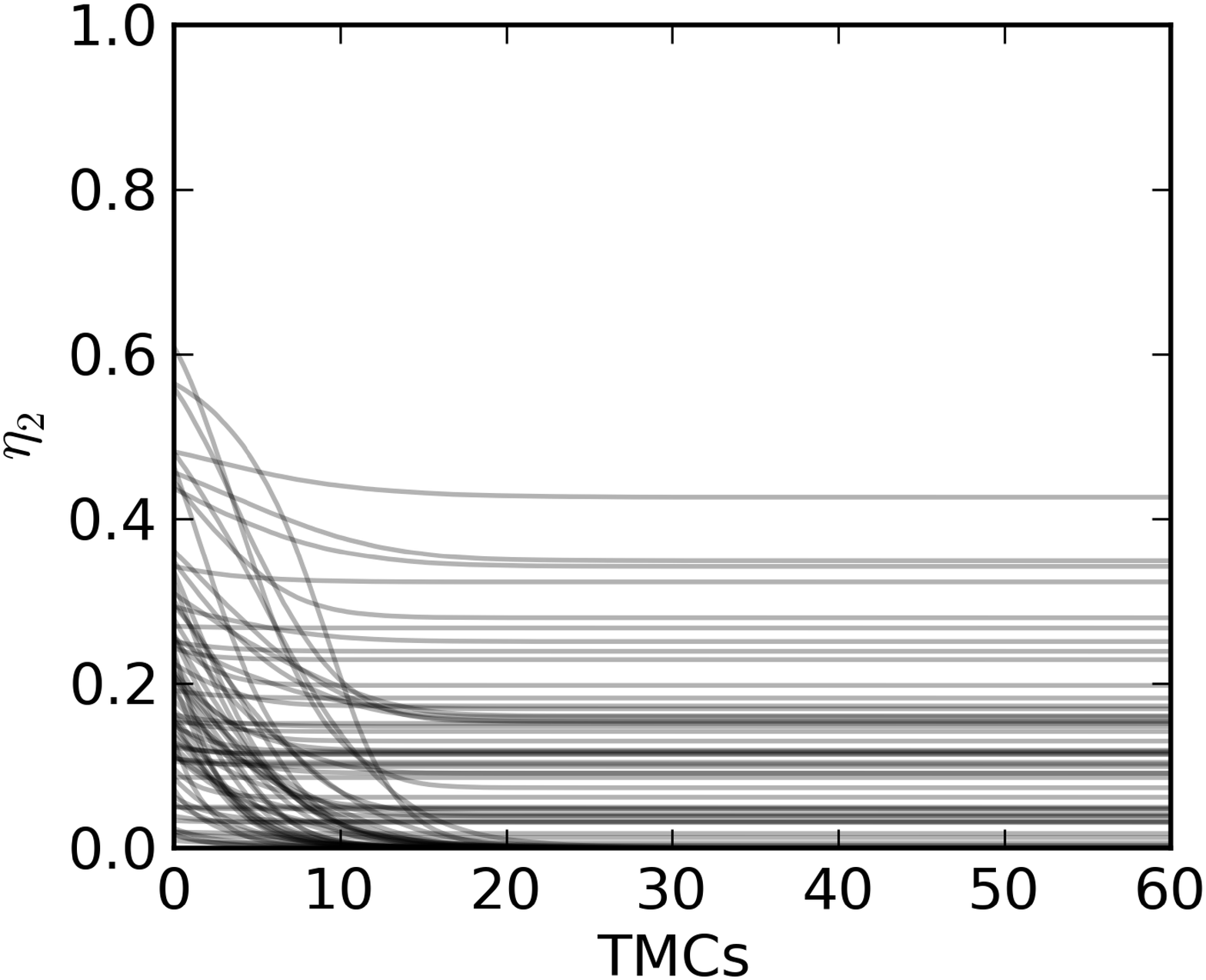}\label{fig:pen_iner-iner}}
\caption*{Time series for the rules $\mathcal{R}_1$ and $\mathcal{R}_2$ in figure \ref{fig:iner} with weights either 0 or 1. Graphs \ref{fig:quad_iner-att} and \ref{fig:pen_iner-att} depict time series containing the full attractor (the ending value is either 0 or 1 depending on the attractor reached), while graphs \ref{fig:quad_iner-iner} and \ref{fig:pen_iner-iner} focus in the surviving inert opinion (which always decays, but can reach a non-zero stationary value).}
\label{fig:series-iner}
\end{figure}

\subsection{Rules without stationary attractors}
\label{ssec:sim-no-attractors}

Consider a rule in which all opinions interact (that is, for all pair of distinct opinions $\sigma$ and $\sigma'$ either $p_{\sigma\rightarrow\sigma'} \neq 0$ or $p_{\sigma'\rightarrow\sigma} \neq 0$), but such that every opinion $\sigma$ has at least one different opinion $\sigma'$ that it can't convince. The independent sets of such rule are all unitary, but we have imposed that $\sigma'\notin\{\sigma\}_{-}$ and $\sigma\neq\sigma'$, so there are no solutions to $\Delta_{-} = \overline{\Delta}$ for this rule and hence it has no stationary attractors (it is possible to build other types of examples as well). An example for 4 opinions is given in figure \ref{fig:no-attr}.

\begin{figure}[hbt!]
\centering
\includegraphics[width=0.5\linewidth]{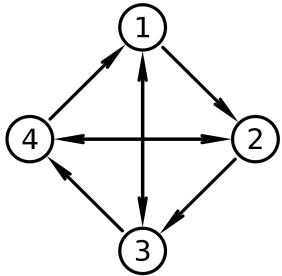}
\caption{A rule that has no attractors}
\label{fig:no-attr}
\end{figure}

In the appendix \ref{ap:heteroclinic} we prove that if a graph has no solutions to $\Delta_{-} = \overline{\Delta}$, then it has at least one directed cycle, where no edge is doubly connected. In the phase space, these cycles manifest themselves as heteroclinic cycles. These cycles will always be polygonal curves connecting the vertices of the simplex that correspond to the nodes the cycle in the graph goes through. Moreover, as the cycle goes through the nodes in $\Delta$, it means that $\Delta$ induces in the confidence rule a strongly connected graph with one component and as such, there exists an unstable fixed point where all the opinions in $\Delta$ coexist (in the example of figure \ref{fig:no-attr} the cycle would be $1\rightarrow 2\rightarrow 3\rightarrow 4\rightarrow 1$, corresponding to the heteroclinic cycle $\overline{P_1P_2P_3P_4P_1}$ and to an unstable fixed point where all opinions coexist). Hence, the heteroclinic cycle is fully contained in the border of $\mathcal{M}_{\Delta}$ and there is a fixed point in the bulk of $\mathcal{M}_{\Delta}$ that leads the trajectories to its border. The typical result is that as time goes by, the trajectories get closer to one of the cycles, which causes oscillations with a diverging period (as they pass each time closer to the consensus states, that are fixed points). In simulations, eventually a random fluctuation puts the system in a state where one of the opinions in the cycle gets extinct, leading the system to a stationary state.

\subsection{Long transients and stationary states}
\label{ssec:sim-transient}

In many simulations, there are situations in which the trajectories get stuck for long times in states that are not attractors. In some of these cases, the simulation got to a stationary state where there are no active connections between the agents (that is, a connection between a pair of agreeing sites and a neighbour that they can convince, according to the confidence rule). In other cases there are active connections, but some opinions appear in negligible amounts and the set $\Delta$ of opinions that are not negligible forms an independent set, but not a solution to $\Delta_{-} = \overline{\Delta}$. In the latter cases, the fixed points in $\mathcal{M}_{\Delta}$ are saddle points meaning that (this is shown in detail in section \ref{ssec:stability}) $\widetilde{\Delta}\equiv(\overline{\Delta} - \Delta_{-})\cap \Delta_{+} \neq \varnothing$ and usually, one (or more) of the negligible opinions will be able to rise again, causing long transients.

\begin{figure}
\includegraphics[width=0.5\linewidth]{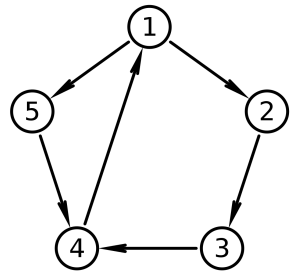}
\caption{A rule particulary prone to display long transients.}
\label{fig:pen_composite_rule}
\end{figure}

Considering the mean-field equations, if $\sigma\in\widetilde{\Delta}$ (meaning that $\sigma$ is negligible), then it evolves according to (see section \ref{ssec:stability} for further explanations)

\begin{equation}
\eta_{\sigma}(t) = \frac{\eta_{\sigma\oo}}{1 - \eta_{\sigma\oo} t \sum_{\sigma'\in\Delta}\eta_{\sigma'\oo} p_{\sigma'\rightarrow\sigma}},
\label{eq:long-transients-CM}
\end{equation}
as long as the opinions in $\overline{\Delta}$ are negligible. This implies that the time the trajectories spend close to these saddle points can be estimated, considering the time it takes for some of the opinions in $\widetilde{\Delta}$ to duplicate its proportion of sites in the network (all the other opinions in $\overline{\Delta}$ will remain negligible for much longer times, see appendix \ref{ap:last-order}). Solving \ref{eq:long-transients-CM} we get

\begin{equation}
\tau \simeq \min_{\sigma\in\widetilde{\Delta}} \left(\frac{1}{2\eta_{\sigma\oo}\sum_{\sigma'\in\Delta} \eta_{\sigma'\oo} p_{\sigma'\rightarrow\sigma}}\right).
\label{eq:return-time}
\end{equation}

We now verify this relation for the integration of the mean-field equations and compare these results with the simulations. We will use the rule in figure \ref{fig:pen_composite_rule}, with $\Delta = \{3,5\}$.

\begin{figure}[hbt!]
\centering
\subfigure[Mean Field] {\includegraphics[width=0.7\linewidth]{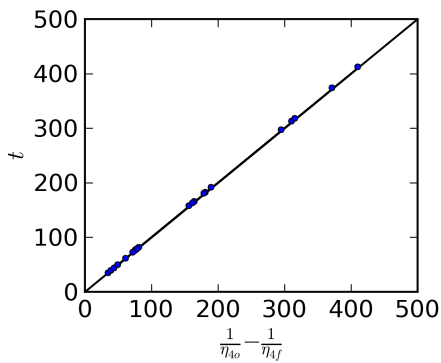}\label{fig:return-CM}}\\
\subfigure[Simulations in Barab\'asi-Albert networks] {\includegraphics[width=0.7\linewidth]{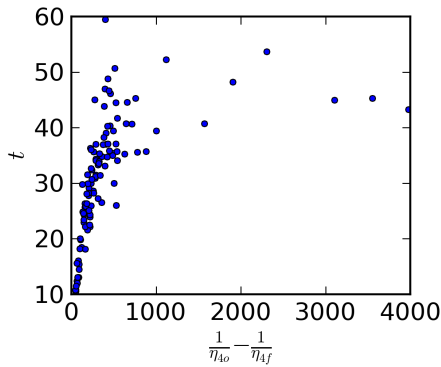}\label{fig:return-BA}}
\caption{(color online) Return times for $\Delta = \{3,5\}$ (more precisely the time the trajectory took since it crossed the surface $\eta_3 + \eta_5 > 1 - \varepsilon$, until it crossed the surface $\eta_4 > 2\varepsilon$, with $\varepsilon = 0.025$). We can see that simulations behave very differently than the mean-field. Particularly, return times are smaller than predicted by equation \ref{eq:return-pen} and the relationship between the two variables is not linear. The blue points correspond to the measured values and the black line corresponds to the prediction of equation \ref{eq:return-pen} (that holds only for the integration of the mean field equations).}
\end{figure}

For this choice of confidence rule and opinion set, we can approximate equation \ref{eq:long-transients-CM} with

\begin{equation}
\frac{1}{\eta_{4\oo}} - \frac{1}{\eta_4} \simeq t.
\label{eq:return-pen}
\end{equation}

The graphs for the mean field and the simulations can be found in figures \ref{fig:return-CM} and \ref{fig:return-BA} and show that if a simulation doesn't go to a stationary state, then it undergoes a transient much faster than what is expected from the mean-field equations.

\section{Analytical Results}
\label{sec:analytical}

\subsection{Stability and structure of the fixed points}
\label{ssec:stability}

Let us recall the mean field equation found in section \ref{ssec:confidence} (equation \ref{eq:mean-field}):

\[
\dot{\eta}_{\sigma} = r\sum_{\sigma'}\eta_{\sigma}\eta_{\sigma'}(\eta_{\sigma}^{q-1}p_{\sigma\rightarrow\sigma'} - \eta_{\sigma'}^{q-1}p_{\sigma'\rightarrow\sigma}).
\]
This equation is a flow where the phase space is the simplex defined by

\[
\sum_{\sigma} \eta_{\sigma} = 1\quad\mathrm{and}\quad \eta_{\sigma} \geq 0\,\forall\,\sigma
\]
and it is easy to show that all trajectories starting in the simplex never leave it (the sum $\sum_{\sigma} \dot{\eta}_{\sigma}$ is 0 because the term being summed in equation \ref{eq:mean-field} is skew-symmetrical and the sign of the variables doesn't change because $\nicefrac{\mathrm{d}\log\eta_{\sigma}}{\mathrm{dt}} = \nicefrac{\dot{\eta}_{\sigma}}{\eta_{\sigma}}$ is bounded in the bulk of the phase space).

Let us define $\vec{\eta} = (\eta_1, \eta_2, \eta_3, \ldots, \eta_M)$, $\dot{\eta}_{\sigma} = F_{\sigma}(\vec{\eta})$ and $\vec{F} = (F_1, F_2, F_3, \ldots, F_M)$, implying that $\dot{\vec{\eta}} = \vec{F}(\vec{\eta})$. Let us also denote the skeleton of the confidence rule by $\mathcal{R}$. The fixed points of equation \ref{eq:mean-field} are given by 

\begin{equation}
\left\{
\begin{array}{l}
\eta_{\sigma} = 0 \quad\mathrm{or}\\
\sum_{\sigma'} \eta_{\sigma}^{q-1}\eta_{\sigma'}p_{\sigma'\rightarrow\sigma} = \sum_{\sigma'} \eta_{\sigma'}^q p_{\sigma\rightarrow\sigma'}
\end{array}
\right.
\label{eq:fixed-point}
\end{equation}
for each opinion $\sigma$. If $\vec{\eta}\,^*$ is a fixed point (meaning, the represented point in phase space is a fixed point), then one can look at the behaviour of trajectories close to it, using the spectrum of the jacobian of $\vec{F}$, evaluated at $\vec{\eta}\,^*$. The jacobian of $\vec{F}$, $\mathcal{J}$ is given by

\[
\mathcal{J}_{\sigma,\sigma'} = 
r(\eta_{\sigma}^qp_{\sigma'\rightarrow\sigma} - q\eta_{\sigma}\eta_{\sigma'}^{q-1}p_{\sigma\rightarrow\sigma'}) +
\]
\begin{equation}
 + r\delta_{\sigma,\sigma'}\sum_{\sigma''}(q\eta_{\sigma}^{q-1}\eta_{\sigma''}p_{\sigma''\rightarrow\sigma} - \eta_{\sigma''}^qp_{\sigma\rightarrow\sigma''}).
\label{eq:jacobian}
\end{equation}

For each fixed point $\vec{\eta}\,^*$, we can define two sets of opinions, $\Delta$ and $\Omega \equiv \overline{\Delta}$, respectively the set of opinions that survive and that are extinct in $\vec{\eta}\,^*$. It is also useful to look at the components induced by $\Delta$: $\Delta_1, \ldots, \Delta_k$. That is,

\[
\mathcal{R}_{\Delta} = \bigcup_{i=1}^k \mathcal{R}_{\Delta_i}.
\]

The jacobian $\mathcal{J}$, can be simplified when it is evaluated at the fixed point and $\Omega \neq\varnothing$ (we will denote all quantities evaluated at the fixed point $\vec{\eta}\,^*$ by adding a $*$ superscript):

\begin{equation}
\mathcal{J}^*_{\sigma,\sigma'} = 
-r\delta_{\sigma,\sigma'}\sum_{\sigma''}\eta^{*q}_{\sigma''}p_{\sigma\rightarrow\sigma''},\,\,\mbox{if}\,\,\sigma\in\Omega
\label{eq:jacobian-omega}
\end{equation}
implying that the jacobian can be written up to permutations as 


\begin{equation}
\mathcal{J}^* = 
\begin{bmatrix}
\mathcal{J}^*_{\Delta} & \mathcal{N}\\
0 & \mathcal{J}^*_{\Omega}
\end{bmatrix},
\label{eq:jacobian-block-simple}
\end{equation}
where $\mathcal{J}^*_{\Delta}$ is the jacobian restricted to the opinions in $\Delta$ and $\mathcal{J}^*_{\Omega}$ is the one restricted to $\Omega$ (that is a diagonal matrix). It also follows from equation \ref{eq:jacobian} that $\mathcal{J}^*_{\Delta}$ can be written in a block-diagonal form where each block $\mathcal{J}_i^*$ is the jacobian restricted to the component $\Delta_i$:

\begin{equation}
\mathcal{J}^* = 
\begin{bmatrix}
\mathcal{J}_1^* & 0 & \ldots & 0 & \mathcal{N}_1\\
0 & \mathcal{J}_2^* & \ldots & 0 & \mathcal{N}_2\\
\vdots & \vdots & \ddots & \vdots & \vdots\\
0 & 0 & \ldots & \mathcal{J}_k^* & \mathcal{N}_k\\
0 & 0 & \ldots & 0 & \mathcal{J}^*_{\Omega}
\end{bmatrix},
\label{eq:jacobian-block-full}
\end{equation}
so in order to find the spectrum of $\mathcal{J}^*$, we need to put together the spectrum of $\mathcal{J}^*_{\Omega}$ (which is trivial to find) with the spectrum of each of the blocks $\mathcal{J}_i^*$. This means that all that is left is to find the spectrum of the jacobian in a coexistence fixed point, for a rule with one component. We show in the appendix \ref{ap:J-Delta} that, if $q\geq 2$, all the eigenvalues of $\mathcal{J}^*_{\Delta}$ have positive real part in this situation and the corresponding eigenvectors are parallel to the phase space, with the exception of one eigenvector that is not parallel and has eigenvalue 0 (with multiplicity 1). As this last eigenvector doesn't take points in the phase space to other points in the phase space it is irrelevant to the stability analysis. This null eigenvalue can be regarded as an artifact of having embedded an $(M-1)$-dimensional phase space in $M$ dimensions. However, according to our reasoning based in equation \ref{eq:jacobian-block-full}, it does mean that if $\Delta$ induces $k$ components, there are $k$ eigenvalues equal to 0. To understand what these null eigenvalues mean, consider the case $\Omega = \varnothing$. We define $\vec{\eta}_i$ and $\vec{F}_i$ as the vectors containing only the coordinates of $\vec{\eta}$ and $\vec{F}$ that are in $\Delta_i$, that is

\[
\vec{\eta} =
\begin{bmatrix}
\vec{\eta}_1 & \vec{\eta}_2 & \ldots & \vec{\eta}_k
\end{bmatrix}
\,\,\,\mbox{and}\,\,\,
\vec{F} =
\begin{bmatrix}
\vec{F}_1 & \vec{F}_2 & \ldots & \vec{F}_k
\end{bmatrix}.
\]
We also define the vectors $\vec{v}_i$ as 
\[
\vec{v}_1 =
\begin{bmatrix}
\vec{\eta}_1 & \vec{0} & \ldots & \vec{0}
\end{bmatrix}, 
\vec{v}_2 =
\begin{bmatrix}
\vec{0} & \vec{\eta}_2 & \vec{0} & \ldots & \vec{0}
\end{bmatrix}, 
\]\[
, \ldots,
\vec{v}_k =
\begin{bmatrix}
\vec{0} & \ldots & \vec{\eta}_k 
\end{bmatrix}.
\]

For every $i$, $\vec{F}_i$ is a function of only $\vec{\eta}_i$. Moreover, $\vec{F}_i$ is homogeneous with order $q+1$, so if

\begin{equation}
\eta_{\Delta_i} = \sum_{\sigma\in\Delta_i} \eta_{\sigma}\,\,\,\mbox{and}\,\,\,\vec{\zeta}_i = \frac{\vec{v}_i^*}{\eta_{\Delta_i}^*} \Rightarrow \vec{F}_i(\alpha\vec{\zeta}_i) = \vec{0}.
\end{equation}
It follows that all the points given by

\begin{equation}
\sum_{i=1}^k\alpha_i \vec{\zeta}_i\,\,\,\mbox{such that}\,\,\,\sum_{i=1}^k \alpha_i = 1\,\,\,\mbox{and}\,\,\,\alpha_i \geq 0\,\,\forall\,\, i
\label{eq:degenerate-fixed-points}
\end{equation}
are fixed points of the model. The set of points defined by equation \ref{eq:degenerate-fixed-points} is the convex hull of the points defined by the $\vec{\zeta}_i$ (we identify these vectors with points in the phase space using equation \ref{eq:embedding}) and as these vectors are linearly independent, it means the convex hull must have $k-1$ dimensions.

Euler's theorem gives us another consequence of the homogeneity of $\vec{F}_i$, namely
\begin{equation}
\mathcal{J}_i \vec{\eta}_i = (q+1)\vec{F}_i.
\label{eq:theorem-euler}
\end{equation}
So in the fixed point we have $\mathcal{J}_i^* \vec{\eta}_i^* = \vec{0} \Rightarrow \mathcal{J}^* \vec{v}_i^* = \vec{0}$, yielding a base of eigenvectors with eigenvalue 0. These eigenvectors can be reorganized so that they are still linearly independent, but only one of them points outwards from the phase space, meaning that said eigenvector is irrelevant to the stability analysis (the embedding artifact previously discussed). Such a base can be taken as $\vec{\zeta}_i - \vec{\eta}\,^*$, $i = 1, \ldots, k-1$, together with $\vec{\eta}\,^*$. Removing the artifact, we have a base that generates the flat defined by equation \ref{eq:degenerate-fixed-points}. As all the points in the flat are fixed points, this means that there is no movement along these directions for trajectories close to the fixed points.

Finaly, we need to study the spectrum of $\mathcal{J}^*_{\Omega}$. As this is a diagonal matrix, we can obtain the eigenvalues directly:

\begin{equation}
\lambda_{\sigma} = -r\sum_{\sigma'\in\Delta}\eta^{*q}_{\sigma'}p_{\sigma\rightarrow\sigma'} \leq 0,
\label{eq:lambda-sigma}
\end{equation}
for each $\sigma\in\Omega$ and the eigenvalue $\lambda_{\sigma}$ tells us if the trajectories are attracted or repelled to the manifold $\eta_{\sigma}=0$.

It follows that if $\mathcal{R}$ is such that $\Omega \subseteq \Delta_{-}$, then $\lambda_{\sigma} < 0$ for all $\sigma\in\Omega$ and we have a complete picture of the behaviour of trajectories close to the fixed point. On the other hand, if $\Omega \nsubseteq \Delta_{-}$, some of these eigenvalues are null and we need to examine higher orders. In order to do that, if $\lambda_{\sigma} = 0$, then we must make a Taylor expansion of $F_{\sigma}$ around $\vec{\eta}\,^*$. As $F_{\sigma}$ is given by

\[
r\sum_{\sigma'}\eta_{\sigma}^q\eta_{\sigma'}p_{\sigma'\rightarrow\sigma} - r\sum_{\sigma'\in\Omega}\eta_{\sigma}\eta_{\sigma'}^{q}p_{\sigma\rightarrow\sigma'},
\]
then it is easy to see that the lower order term, different from 0 yields 

\begin{equation}
\dot{\eta}_{\sigma} \simeq r\eta_{\sigma}^q\sum_{\sigma'\in\Delta}\eta_{\sigma'}^*p_{\sigma'\rightarrow\sigma},
\label{eq:high-order}
\end{equation}
meaning that the trajectories are repelled from the manifold $\eta_{\sigma} = 0$, unless $\sigma\notin\Delta_{+}$. In particular, for $q=2$ and $r=1$, the solution of equation \ref{eq:high-order} reads

\[
\eta_{\sigma}(t) = \frac{\eta_{\sigma\oo}}{1 - \eta_{\sigma\oo} t \sum_{\sigma'}\eta_{\sigma'}^* p_{\sigma'\rightarrow\sigma}},
\]
as stated in section \ref{ssec:sim-transient} (equation \ref{eq:long-transients-CM}).

Let $\mathcal{M}_{\Delta}$ be the manifold where only opinions in $\Delta$ survive. We then know that if all $\lambda_{\sigma} < 0$, the trajectories get attracted to $\mathcal{M}_{\Delta}$, but if $\lambda_{\sigma} = 0$ and $\sigma\in\Delta_{+}$ for any $\sigma\in\Omega$, then the trajectories get repelled from $\mathcal{M}_{\Delta}$. Finaly, suppose that either $\lambda_{\sigma} < 0$ or $\sigma\notin\Delta_{+}$ for all opinions in $\Omega$. Let $\omega$ be the set of opinions such that $\lambda_{\sigma} = 0$,

\[
\Lambda \equiv \max_{\sigma\in\Omega-\omega} \lambda_{\sigma} < 0\,\,\mbox{and}\,\,\eta_{\omega} \equiv \sum_{\sigma\in\omega}\eta_{\sigma}.
\]
We show in the appendix \ref{ap:last-order} that starting in a sufficiently close neighbourhood of $\eta_{\omega}=0$ the following inequality holds

\begin{equation}
\eta_{\omega\oo} e^{\nicefrac{\vert\Omega-\omega\vert r}{q\Lambda}} \leq \eta_{\omega} \leq \eta_{\omega\oo}e^{-\nicefrac{\vert\Omega-\omega\vert r}{\Lambda}},
\label{eq:assymptotic-degeneracy}
\end{equation}
and so trajectories are neither attracted to nor repelled from $\mathcal{M}_{\Delta}$.

We can now put all these results together. If $\vec{\eta}\,^*$ is a fixed point, such that $\mathcal{R}_{\Delta}$ has $k$ components, then the trajectories in a neighbourhood of $\vec{\eta}\,^*$ are such that (remembering that $\Omega = \overline{\Delta}$):
\begin{itemize}
\item There are $|\Delta| + |(\overline{\Delta} - \Delta_{-})\cap\Delta_{+}| - k$ unstable directions.
\item There are $|\overline{\Delta}\cap\Delta_{-}|$ stable directions.
\item There are $k - 1$ directions along which there is no movement.
\item There are $|(\overline{\Delta} - \Delta_{-}) \cap (\overline{\Delta} - \Delta_{+})|$ directions that are neither attractive nor repulsive, but along which there is movement.
\end{itemize}

It follows from the equations \ref{eq:mean-field} that if $\Delta$ is independent, then $\mathcal{M}_{\Delta}$ is composed entirely of fixed points. On the other hand, if $\vec{\eta}\,^*$ is attractive, then it has no unstable directions and the only neutral directions are those along which there is no movement, so if $k$ is the number of components of $\mathcal{R}_{\Delta}$, then $|\Delta|\geq k$ and it follows that

\[
|\Delta| + |(\overline{\Delta} - \Delta_{-})\cap\Delta_{+}| - k = 0 \Rightarrow |(\overline{\Delta} - \Delta_{-})\cap\Delta_{+}| = 
\]
\begin{equation}
= k - |\Delta| \leq 0 \Rightarrow |(\overline{\Delta} - \Delta_{-})\cap\Delta_{+}| = 0 \Rightarrow |\Delta| = k.
\label{eq:attractor}
\end{equation}
$|\Delta| = k$ implies that $\Delta$ is independent. The condition that there are no neutral directions yields

\begin{equation}
(\overline{\Delta} - \Delta_{-})\cap\Delta_{+} = \varnothing\,\,\mbox{and}\,\,(\overline{\Delta} - \Delta_{-}) \cap (\overline{\Delta} - \Delta_{+}) = \varnothing.
\label{eq:attractor2}
\end{equation}
This implies that

\[
((\overline{\Delta} - \Delta_{-})\cap\Delta_{+})\cup((\overline{\Delta} - \Delta_{-}) \cap (\overline{\Delta} - \Delta_{+})) = \varnothing \Leftrightarrow
\]\[
(\overline{\Delta} - \Delta_{-})\cap(\Delta_{+}\cup(\overline{\Delta} - \Delta_{+})) = \varnothing.
\]
As $\Delta$ is independent, $\Delta_{+} \subseteq \overline{\Delta}$ and so $\Delta_{+}\cup(\overline{\Delta} - \Delta_{+}) = \overline{\Delta}$, implying

\[
(\overline{\Delta} - \Delta_{-})\cap\overline{\Delta} = \varnothing \Leftrightarrow \overline{\Delta} - \Delta_{-} = \varnothing \Leftrightarrow \overline{\Delta} \subseteq \Delta_{-}.
\]
As $\Delta$ is independent we have $\Delta_{-} \subseteq \overline{\Delta} \Rightarrow \overline{\Delta} = \Delta_{-}$. So, as $\overline{\Delta} = \Delta_{-}$ always solves equations \ref{eq:attractor}, it follows that $\mathcal{M}_{\Delta}$ is attractive iff $\Delta$ is independent and $\overline{\Delta} = \Delta_{-}$. As we show in appendix \ref{ap:delta-maximal}, $\overline{\Delta} = \Delta_{-}$ alone implies that $\Delta$ is maximal independent, so to find the attractors of the model it suffices to find all the sets $\Delta$, satisfying $\Delta_{-} = \overline{\Delta}$ and the corresponding manifolds $\mathcal{M}_{\Delta}$.

\subsection{Existence of the fixed points}
\label{ssec:existence}

In the last section, we analysed the stability properties of a fixed point, supposing that it existed (given a fixed point, what its stability properties are). Now we check when a fixed point, where only the opinions in a given set $\Delta$ coexist, exists. Our analysis of the case in which the set $\Delta$ of surviving opinions induces more than one component in the confidence rule shows us that we only need to study the case in which all the opinions coexist and the confidence rule has one component.

In this case, any fixed point where all opinions coexist must be an unstable node, if it exists, and so if we had embedded our phase space in $M-1$ instead of $M$ dimensions (substituting $\eta_M$ by $1-\sum_{\sigma \neq M}\eta_{\sigma}$, for example), the jacobian $\widetilde{\mathcal{J}}$ of the corresponding flux would be a real matrix that is positive definite when evaluated in such a fixed point, implying that $\det (\widetilde{\mathcal{J}}^*) > 0$. It follows that the index of the fixed point is 1 \footnote{The index in the case when the jacobian is not singular equals the sign of its determinant. More information about indices and their meaning can be found in most textbooks about differential geometry.} and that we can apply the implicit function theorem in this case.

In the appendix \ref{ap:TPH}, we use these informations together with the Poincar\'e-Hopf theorem to show that if our confidence rule is such that the directed skeleton is a complete directed graph (a graph where there is a doubly connected edge between any two nodes) then there exists exactly one fixed point where all opinions coexist. On the other hand, a confidence rule $\mathcal{R}$ can be regarded as a point in the parameter space and for every confidence rule, there exists arbitrarily small neighbourhoods of it in this space, containing rules whose directed skeletons are complete. So for every confidence rule $\mathcal{R}$, there exists a path in the parameter space leading to it, but such that all other rules in the path have complete skeleton. Finaly, if our rule has a complete skeleton, then we can apply the implicit function theorem for its coexistence fixed point, meaning that the fixed point changes continuously for continuous changes in the parameters (changes in the rule). More importantly, the jacobian evaluated in the fixed point changes continuously, as well as its eigenvalues.

Suppose then that we have a rule $\mathcal{R}$ in which there are no coexistence fixed points. We build a path ending in $\mathcal{R}$, where all rules have complete skeleton and we look at the coexistence fixed point along the path. The limit of the fixed point as the rule tends to $\mathcal{R}$ must also be a fixed point (as $\vec{F}$ is changing continuously too). Examining the eigenvalues of the jacobian evaluated at the fixed point along the path, with the exception of the limit fixed point, the real parts of the eigenvalues must all be positive. For the limit fixed point, there exists a set $\Omega$ of non-surviving opinions and $\mathcal{J}_{\Omega}$ must have real non-positive eigenvalues, implying that $\mathcal{J}_{\Omega} = 0$ (by continuity). Recalling equation \ref{eq:lambda-sigma}, this implies that if $\Delta$ is the set of surviving opinions in the limit fixed point, then $\Delta_{-}\cap\overline{\Delta} = \varnothing$.

On the other hand, suppose that we have a rule with one component and a set $\Delta \neq \varnothing, V$ ($V$ is the set of all nodes) such that $\Delta_{-}\cap\overline{\Delta} = \varnothing$. As we have only one component, we must have $\Delta_{+}\cap\overline{\Delta} \neq \varnothing$, and so if we define $\eta_{\Delta} = \sum_{\sigma\in\Delta}\eta_{\sigma}$, then

\[
\dot{\eta}_{\Delta} = -r\sum_{\sigma\in\Delta}\sum_{\sigma'\in\overline{\Delta}\cap\Delta_{-}} \eta_{\sigma}\eta^q_{\sigma'}p_{\sigma\rightarrow\sigma'},
\]
meaning $\dot{\eta}_{\Delta} < 0$ in the whole region of the phase space where all opinions coexist. This implies that there exists a fixed point where all opinions coexist iff there is no set of opinions $\Delta \neq \varnothing, V$, obeying $\Delta_{-}\cap\overline{\Delta} = \varnothing$. We show in the appendix \ref{ap:sources-connectivity}, that this is equivalent to saying the graph is strongly connected. So going back to our previous results about the structure of the fixed points, we have that there exists a fixed point where only the opinions in $\Delta$ survive iff $\Delta$ induces an union of strongly connected graphs (as such a fixed point exists iff it exists for each of the components separately).

A similar argument can be used to prove uniqueness. Suppose that $\mathcal{R}$ is a rule with one component that has coexistence fixed points. We can build a path ending in $\mathcal{R}$ going only through rules with complete skeleton. But the coexistence fixed points are unique along this path and as we can apply the implicit function theorem, any coexistence fixed point of $\mathcal{R}$ must be a limit fixed point of the rules in the path, implying that it is also unique. In the case where $\Delta$ induces $k$ strongly connected graphs, this implies that there exists one and only one $(k-1)$-dimensional flat where only opinions in $\Delta$ survive.

\section{Conclusions}
On this work, we expanded our previous results about the Sznajd model with general confidence rules (interpreted here as biases and prejudices), giving analytical results about the existence, structure and stability properties of the fixed points in the mean-field case, finding a very rich behaviour. We gave simulation results in Barab\'asi-Albert networks that show examples of this mean-field behaviour and showed some of the discrepances between the model simulated in these networks and the integration of the mean-field equations.

Even though neither the equations for the fixed points can be solved analiticaly, nor can the exact eigenvalues of the Jacobian be all determined, our dynamical systems approach was still able to determine the sign of the real parts of these eigenvalues and the higher order behaviours, when these were needed. Surprisingly, this analysis showed us that the various properties of the fixed points depend only on a few qualitative properties of the confidence rule (the directed skeleton). This, in turn, allowed us to make a connection between the mean-field results and graph theory problems and this connection can even be used to study more complex behaviours, like the heteroclinic cycles in the phase space that always appear in the absence of attractors.

In regard to the simulations, most of the discrepances with the mean-field seem to come from the existence of frozen states that don't correspond to mean-field attractors, but that can be reached by the model in a network. It is not entirely clear if these are purely finite size effects, but their origin suggests that they should be more common as the number of opinions increases and that the introduction of a random noise, in which opinions change randomly with a given probability, should destroy this effect. A curious finding in the confidence rule studied in section \ref{ssec:sim-transient} is that when simulations got close to the frozen states, but managed to get away from them, they took much less time than would be expected from the mean-field results (it must be stressed that we only investigated this behaviour for this confidence rule).

Given the simple conclusions that were reached and the generality of our model (we would like to stress that the mean-field results are valid not only for the Sznajd model but for the $q$-voter model with $q\geq 2$), we believe that similar approaches might be fruitful in other models where asymmetrical interactions exist, way beyond opinion propagation and sociophysics, like infection spreading and ecology models. It would also be interesting to see if similar connections with graph theory problems exist in other models and, if they do, how rich they are.

\begin{center}
{\bf Acknowledgements}
\end{center}
AMT would like to thank C. C. Gomes for discussions that were crucial to the results in the appendix \ref{ap:TPH}. Both authors thank FAPESP for financial support.

\appendix

\section{Graph theory theorems with applications to our model}
\subsection{$\overline{\Delta} = \Delta_{-}$ implies maximal independence}
\label{ap:delta-maximal}

\begin{theorem}
Let $G$ be a graph and let $\Delta$ be a set of nodes in it such that $\overline{\Delta} = \Delta_{-}$. This implies that $\Delta$ is a maximal independent set.
\end{theorem}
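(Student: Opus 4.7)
The plan is to establish the two defining conditions of a maximal independent set directly from the hypothesis $\overline{\Delta} = \Delta_{-}$, treating independence and maximality separately. Both implications follow by short arguments by contradiction, so there is no real technical obstacle; the task is mainly bookkeeping with the definitions of $\Delta_{-}$ and of independence given in the glossary.

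First I would prove independence. Suppose for contradiction that $\Delta$ is not independent, so there exist $u,v\in\Delta$ and an arc from $u$ to $v$ in $G$. Then, by the very definition of the predecessor set, $u$ points to a node of $\Delta$, hence $u\in\Delta_{-}$. Invoking the hypothesis $\Delta_{-}=\overline{\Delta}$, we conclude $u\in\overline{\Delta}$, contradicting $u\in\Delta$. Therefore $\Delta$ contains no internal arc and is independent.

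Next I would prove maximality. If $\Delta$ is the whole vertex set, there is nothing to show. Otherwise, pick any $v\in\overline{\Delta}$; I must exhibit an arc between $v$ and some vertex of $\Delta$, so that $\Delta\cup\{v\}$ fails to be independent. But $v\in\overline{\Delta}=\Delta_{-}$ means precisely that $v$ points to some $w\in\Delta$, giving the required arc. Thus the addition of any $v\notin\Delta$ destroys independence, and $\Delta$ is maximal. Combining the two steps yields the theorem. The only subtle point worth flagging is the edge case $\overline{\Delta}=\varnothing$ (equivalently $\Delta$ equal to the full vertex set): then $\Delta_{-}=\varnothing$ forces $G$ to have no arcs at all, so $\Delta$ is vacuously independent and maximal by the convention stated in the glossary.
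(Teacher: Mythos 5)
Your proof is correct and follows essentially the same route as the paper: independence by contradiction via $u\in\Delta_{-}=\overline{\Delta}$, and maximality by noting that any $v\in\overline{\Delta}=\Delta_{-}$ points into $\Delta$, so $\Delta\cup\{v\}$ cannot be independent. The paper dresses the maximality step in a longer set-algebra computation, but the underlying argument is identical, and your handling of the $\overline{\Delta}=\varnothing$ case matches the paper's.
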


\begin{proof}
To see this, suppose that $\Delta_{-} = \overline{\Delta}$ but $\Delta$ is not independent, then it follows that there exists $\sigma, \sigma'\in \Delta$ such that $\sigma\in \{\sigma'\}_{+} \Leftrightarrow \sigma'\in \{\sigma\}_{-} \Rightarrow \sigma'\in\Delta \cap \Delta_{-} \neq \varnothing \Rightarrow \Delta \cap \overline{\Delta} \neq \varnothing$, which is a contradiction.

So if $\Delta_{-} = \overline{\Delta}$ then $\Delta$ is independent. If $\overline{\Delta} = \varnothing$ then it is trivial that $\Delta$ is maximal. If $\overline{\Delta} \neq \varnothing$, take $\sigma \in \overline{\Delta}$. It follows that

\[
(\Delta\cup\{\sigma\})_{-} = \Delta_{-}\cup\{\sigma\}_{-} = \overline{\Delta}\cup\{\sigma\}_{-} \Rightarrow 
\]\[
\Rightarrow (\Delta\cup\{\sigma\})\cap (\Delta\cup\{\sigma\})_{-} = 
(\Delta\cup\{\sigma\})\cap (\overline{\Delta}\cup\{\sigma\}_{-}) = 
\]\[
= ((\Delta\cup\{\sigma\})\cap \overline{\Delta})\cup ((\Delta\cup\{\sigma\})\cap \{\sigma\}_{-}) \supseteq
\]\[
\supseteq (\Delta\cup\{\sigma\})\cap \overline{\Delta} = \{\sigma\} \neq \varnothing \Rightarrow (\Delta\cup\{\sigma\})\cap (\Delta\cup\{\sigma\})_{-}\neq \varnothing,
\]
which implies that $\Delta\cup\{\sigma\}$ is not independent and hence, $\Delta$ is maximal.
\end{proof}

\subsection{Relation between the absence of attractors and heteroclinic cycles}
\label{ap:heteroclinic}

\begin{theorem}
Let $G$ be a directed graph such that no set of nodes obeys $\overline{\Delta} = \Delta_{-}$, then there exists a directed cycle in $G$ that doesn't use any of the doubly linked edges.
\end{theorem}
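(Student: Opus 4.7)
The plan is to prove the contrapositive. Let $G'$ denote the subgraph of $G$ obtained by deleting every doubly linked edge. I would show that if $G'$ is a directed acyclic graph (DAG) then $G$ admits a set $\Delta$ with $\overline{\Delta}=\Delta_{-}$; equivalently, the absence of such a $\Delta$ forces $G'$ to contain a directed cycle, and this is exactly a directed cycle of $G$ that uses no doubly linked edge.

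Assuming $G'$ is a DAG, fix a topological ordering $v_1,v_2,\ldots,v_n$ of the vertices such that every edge of $G'$ runs from $v_i$ to $v_j$ with $i<j$. I would then build $\Delta$ greedily by scanning the vertices in reverse order $v_n,v_{n-1},\ldots,v_1$, adding $v_i$ to $\Delta$ exactly when the resulting set is still independent in $G$. By construction $\Delta$ is independent in $G$, and independence is nothing but the statement that no vertex of $\Delta$ points into $\Delta$, i.e.\ $\Delta_{-}\subseteq\overline{\Delta}$.

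For the reverse inclusion, take any $v_i\notin\Delta$. Because $v_i$ was rejected, at the moment of processing there must have been some $v_j\in\Delta$ with $j>i$ adjacent to $v_i$ in $G$. If $v_i\to v_j$ is present in $G$, then $v_i\in\Delta_{-}$ directly. If instead only $v_j\to v_i$ were present, this would be an edge of $G'$ going from a higher to a lower index, contradicting the topological order. Hence that edge must in fact be doubly linked, forcing $v_i\to v_j$ to be in $G$ as well, and again $v_i\in\Delta_{-}$. Either way $\overline{\Delta}\subseteq\Delta_{-}$, so $\overline{\Delta}=\Delta_{-}$, finishing the contrapositive.

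The main obstacle is choosing the right construction of $\Delta$: once one decides to run the greedy independent-set procedure in \emph{reverse} topological order of $G'$, the two cases of the key case analysis collapse into the same conclusion, because the topological ordering of $G'$ forbids single-direction backward edges and thus turns every forced backward adjacency into a doubly linked pair of edges. The remaining ingredients (existence of a topological ordering on any DAG and termination of the greedy construction) are entirely standard.
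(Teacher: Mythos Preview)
Your proposal is correct and follows essentially the same approach as the paper's proof: both argue the contrapositive, strip the doubly linked edges to obtain a DAG $G'$, and then greedily build an independent set $\Delta$ by processing vertices starting from the sinks of $G'$, using the acyclicity of $G'$ to rule out the bad one-way edge case. The only difference is cosmetic: the paper works with the reachability partial order on $G'$ and repeatedly picks minimal elements while maintaining an explicit ``discard'' set $\Xi$, whereas you fix a full topological order up front and scan it in reverse, which makes the case analysis slightly cleaner.
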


\begin{proof}
Suppose that there is no such cycle in $G$ and let $G'$ be the graph $G$ after removing all the doubly linked edges. By hypothesis, $G'$ is a directed acyclic graph and so a topological ordering in $G'$ is possible. This means that we can define a strict partial order in $V(G)$:
\[
i\prec j \mbox{ iff there is a path from } j \mbox{ to } i \mbox{ in } G'.
\]
We can also restrict this order to a subset $\Omega$ of $V(G)$, such that $i,j\in\Omega \Rightarrow i \prec_{\Omega} j$ iff $i\prec j$ (note that this is not the same thing as saying the path exists in $G'_{\Omega}$). Consider now the set $\Delta$ built from the following algorithm:
\begin{enumerate}
\item Attribute $\Delta \leftarrow \varnothing$, $\Xi \leftarrow \varnothing$ and $\Omega \leftarrow V(G)$.
\item If $\Omega$ equals $\varnothing$ stop, else let $i$ be a minimal element of $\prec_{\Omega}$.
\item Remove $i$ from $\Omega$ and add it to the set $\Delta$.
\item Remove the elements from the predecessor of $i$ with respect to $G$ that are in $\Omega$, ($i_{-}(G)\cap\Omega$), from $\Omega$ and add them to $\Xi$.
\item go to 2.
\end{enumerate}
By construction the set $\Xi$ obeys $\overline{\Delta} = \Xi \subseteq \Delta_{-}$ (the predecessor with respect to $G$). Moreover, the set $\Delta$ is independent. To see this, suppose that at some time during the construction of $\Delta$, there are no connections in $G$ between nodes in $\Delta$ and nodes in $\Omega$ when we reach step 2 (this is trivially true for the starting iteration) and let $i$ be the minimal element of $\prec_{\Omega}$ chosen in this step. As $i$ is minimal, there are no nodes in $\Omega$ such that $j\prec_{\Omega} i$ and hence there are no paths from $i$ to any other element in $\Omega$ in the graph $G'$ and hence $i_{+}(G)\cap\Omega$ contains only nodes that are connected to $i$ through doubly connected edges, implying $i_{+}(G)\cap\Omega \subseteq i_{-}(G)\cap\Omega$ and so in step 4 we are transfering all the nodes in $\Omega$, that had any connection with $\Delta$ after step 3, to the set $\Xi$. So after an iteration of the algorithm there are still no connections between nodes in $\Delta$ and $\Omega$ when reach step 2 again (and so by induction, this holds during the whole contruction of $\Delta$). But as the nodes are added to $\Delta$ from $\Omega$ one at a time, adding a new node won't add connections between nodes in $\Delta$, implying that $\Delta$ remains independent during its whole construction. On the other hand, this implies $\Delta_{-}\subseteq\overline{\Delta}$. Recalling that by the construction of $\Xi$ we have $\overline{\Delta} \subseteq \Delta_{-}$, it follows that $\Delta_{-}=\overline{\Delta}$.
\end{proof}

The relevance of this theorem to our problem is that a solution to $\overline{\Delta}=\Delta_{-}$ in the skeleton of the rule is equivalent to a static attractor in the phase space and saying the cycle $\sigma_1\rightarrow\sigma_2\rightarrow\ldots\rightarrow\sigma_1$ in this skeleton has no doubly connected edges is equivalent to saying that the polygonal curve $\overline{P_{\sigma_1}P_{\sigma_2}\ldots P_{\sigma_1}}$ is a heteroclinic cycle, meaning that every rule that has no static attractors must have at least one heteroclinic cycle.

\subsection{Necessary and sufficient condition for a graph to be strongly connected}
\label{ap:sources-connectivity}

Let $G$ be a graph and $\Delta\neq\varnothing$ a set nodes.

\begin{definition}
$\Delta$ is a sink (source) iff it obeys $\Delta_{+}\cap\overline{\Delta} = \varnothing$ ($\Delta_{-}\cap\overline{\Delta} = \varnothing$). In both cases, $\Delta$ is called minimal if there is no non-empty proper subset of it with the same property.
\end{definition}

\begin{definition}
The span of a node $i$, $i_{span}$ is the set of all nodes $j$ in $G$, such that $j$ can be reached from $i$. The span of a set of nodes $\Delta$ is defined as the union of the span of each of its nodes.
\end{definition}

\begin{corollary}
Every span is a sink and if $\Delta$ is a sink, then $\Delta = \Delta_{span}$.
\end{corollary}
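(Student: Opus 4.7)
The plan is to handle the two assertions separately, both by a straightforward unpacking of the definitions together with path concatenation arguments.

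For the first claim, I would pick any set $S$ and let $\Delta = S_{span}$. To show $\Delta$ is a sink I need $\Delta_{+}\cap\overline{\Delta} = \varnothing$, i.e.\ every arc out of $\Delta$ lands back in $\Delta$. So take $j\in\Delta$ and an arc $j\to k$. Since $j\in S_{span}$, there exists some $i\in S$ and a directed path $i\to\cdots\to j$; appending the arc $j\to k$ gives a directed path $i\to\cdots\to j\to k$ in $G$, showing that $k$ is reachable from $i\in S$ and hence $k\in S_{span}=\Delta$. Thus $\Delta_{+}\subseteq\Delta$, which is the definition of a sink.

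For the second claim, I would prove the two inclusions of $\Delta = \Delta_{span}$ separately. The inclusion $\Delta \subseteq \Delta_{span}$ is immediate from the convention that a node is reachable from itself via the trivial (length zero) path, so $i\in i_{span}$ for every $i$, and taking the union over $i\in\Delta$ gives $\Delta\subseteq\Delta_{span}$. For the reverse inclusion, take $j\in\Delta_{span}$: by definition there exists $i\in\Delta$ and a directed path $i = v_{0}\to v_{1}\to\cdots\to v_{n} = j$ in $G$. I would then argue by induction on $n$ that $v_{\ell}\in\Delta$ for every $\ell$. The base case $\ell = 0$ holds because $v_{0} = i\in\Delta$. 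For the inductive step, if $v_{\ell}\in\Delta$, then the arc $v_{\ell}\to v_{\ell+1}$ gives $v_{\ell+1}\in (v_{\ell})_{+}\subseteq\Delta_{+}$; because $\Delta$ is a sink, $\Delta_{+}\subseteq\Delta$, so $v_{\ell+1}\in\Delta$. Applying this up to $\ell = n$ yields $j = v_{n}\in\Delta$, completing the reverse inclusion.

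There is essentially no obstacle beyond keeping track of the convention that $i\in i_{span}$; everything else is a two-line path-concatenation argument. The only care one must take is in the second inclusion, where the induction must genuinely traverse the entire witnessing path rather than conclude directly from $j\in\Delta_{+}$ (which need not even hold if the path has length zero). Using the inductive formulation makes the argument uniform in the path length and avoids any case split between $i = j$ and $i\neq j$.
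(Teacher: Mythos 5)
Your proof is correct and is exactly the direct definitional argument (path concatenation for the first claim, induction along the witnessing path using $\Delta_{+}\subseteq\Delta$ for the second) that the paper treats as immediate and omits. Your care with the reflexive convention $i\in i_{span}$ is appropriate, since the paper relies on it implicitly in the subsequent theorem (e.g.\ in concluding that $j_{span}(G)$ is non-empty).
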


\begin{corollary}
As no arc leaves a sink, if $X$ is a sink and $Y\subseteq X$ then $Y_{span} = Y_{span}(G_X)$, the span of $Y$ in $G_X$.
\end{corollary}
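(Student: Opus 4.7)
The plan is to prove the corollary by verifying both inclusions between $Y_{span}$ (the span computed in $G$) and $Y_{span}(G_X)$ (the span computed in $G_X$). The inclusion $Y_{span}(G_X)\subseteq Y_{span}$ is immediate, since $G_X$ is obtained from $G$ by discarding vertices and arcs, so any directed path in $G_X$ from a node of $Y$ to some node $j$ is a fortiori a directed path in $G$, witnessing $j\in Y_{span}$.

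The nontrivial direction is $Y_{span}\subseteq Y_{span}(G_X)$, and the key step is to show that every directed path in $G$ that begins at a node of $Y\subseteq X$ stays inside $X$. I would prove this by induction on the length of the path. The starting vertex lies in $Y\subseteq X$, which handles the base case. For the inductive step, suppose the first $k$ vertices of the path lie in $X$; the $(k+1)$-th vertex is an out-neighbor in $G$ of a vertex in $X$, and since $X$ is a sink (meaning $X_{+}\cap \overline{X}=\varnothing$), this out-neighbor cannot lie in $\overline{X}$ and so must lie in $X$. Once the path is confined to $X$, every arc it uses has both endpoints in $X$, so by definition of the induced subgraph it is an arc of $G_X$; the same path therefore exhibits reachability of its endpoint from $Y$ inside $G_X$, completing the inclusion.

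I do not anticipate any real obstacle: the statement is essentially a direct unpacking of the definitions of a sink and of the induced subgraph, which is why the author flags it as a corollary with the informal justification ``as no arc leaves a sink''. The only point requiring mild care is to phrase the argument in terms of paths rather than of iterated successor sets, because the successor of a vertex computed in $G$ and in $G_X$ only coincide once one has already established that we are inside $X$, so the induction on path length is the cleanest way to bootstrap from the base inclusion $Y\subseteq X$ to full confinement of reachability.
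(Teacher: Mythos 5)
Your argument is correct and is exactly the intended one: the paper offers no written proof beyond the phrase ``as no arc leaves a sink,'' and your induction on path length showing that any directed path starting in $Y\subseteq X$ remains confined to $X$ is the natural unpacking of that remark. Nothing further is needed.
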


\begin{theorem}
A sink (source) is minimal iff it induces a strongly connected graph.
\end{theorem}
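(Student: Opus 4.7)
The plan is to prove the sink version first and then deduce the source version by reversing all arcs (this dualization preserves minimality, sinkhood becoming sourcehood, and strong connectivity of the induced subgraph). So throughout I focus on sinks.

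For the forward direction (minimal sink implies $G_\Delta$ strongly connected), I would fix an arbitrary node $i \in \Delta$ and consider $S := \{i\}_{span}$, the span taken in $G$. By the first corollary, $S$ is a sink, and it is non-empty since $i \in S$. Because $\Delta$ is itself a sink, no arc can leave $\Delta$, so starting from $i \in \Delta$ one can only reach nodes inside $\Delta$; hence $S \subseteq \Delta$. By minimality of $\Delta$, the non-empty sink $S$ cannot be a proper subset, so $S = \Delta$. Thus every node of $\Delta$ is reachable from $i$ via a path that stays in $\Delta$ (using the second corollary, which identifies $\{i\}_{span}$ with $\{i\}_{span}(G_\Delta)$). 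Since $i$ was arbitrary, $G_\Delta$ is strongly connected.

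For the reverse direction (strongly connected induced subgraph forces minimality), I would argue by contradiction: suppose $\Delta' \subsetneq \Delta$ is a non-empty sink. Pick $i \in \Delta'$ and $j \in \Delta \setminus \Delta'$. By strong connectivity of $G_\Delta$ there is a directed path from $i$ to $j$ using only arcs inside $\Delta$. But every arc of this path is an arc of $G$, and $\Delta'$ being a sink in $G$ forbids any arc from $\Delta'$ to $\overline{\Delta'}$; inductively the path cannot leave $\Delta'$, contradicting $j \notin \Delta'$. Hence no such $\Delta'$ exists and $\Delta$ is minimal.

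There isn't really a hard step here; the whole argument is just a careful unpacking of the definitions together with the two corollaries already established. The only point one must be slightly careful about is the distinction between the span computed in $G$ versus in $G_\Delta$, which is exactly why the second corollary is invoked in the forward direction. The source case is obtained verbatim from the sink case by passing to the opposite graph $G^{op}$, in which a set is a sink iff it is a source in $G$, minimality is unchanged, and $G^{op}_\Delta = (G_\Delta)^{op}$ is strongly connected iff $G_\Delta$ is.
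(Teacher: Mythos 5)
Your proof is correct and follows essentially the same route as the paper's: both directions rest on the observation that the span of a node (or subset) of a sink is itself a sink contained in it, combined with the two corollaries identifying spans computed in $G$ with spans computed in $G_{\Delta}$, and the source case is handled by the same arc-reversal duality. The only difference is cosmetic — you run the "minimal $\Rightarrow$ strongly connected" direction directly via $\{i\}_{span}=\Delta$ rather than by contradiction — so no further comment is needed.
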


\begin{proof}
Let $\Delta$ be a sink that induces a strongly connected graph in $G$. Suppose by absurd that $\Delta$ in not minimal, then there exists $\Gamma \subset \Delta$, such that $\Gamma$ is also a sink and $\Gamma\neq\varnothing$. Let $\omega \in \Delta - \Gamma$. As $G_{\Delta}$ is strongly connected, then for all $i,j\in\Delta$ we have $i\in j_{span}(G_{\Delta})$ and hence $\omega \in \Gamma_{span}(G_{\Delta})$. As $\Gamma \subset \Delta$ and $\Delta$ is a sink in $G$, it follows that $\Gamma_{span}(G_{\Delta}) = \Gamma_{span}(G)$ and as $\Gamma$ is also a sink in $G$ we have $\Gamma_{span}(G) = \Gamma$. But this implies $\omega \in \Gamma$, which is a contradiction and so $\Delta$ must be minimal.

On the other hand, if $\Delta$ is a minimal sink in $G$ and we suppose by absurd that $G_{\Delta}$ is not strongly connected, there exists $i,j\in\Delta$ such that $i\notin j_{span}(G_{\Delta})$. $\Delta$ is a sink and $\{j\}\subseteq\Delta$, so this means $i\notin j_{span}(G)\subseteq$ and $j_{span}(G)\subseteq \Delta_{span}(G) = \Delta$. But then $j_{span}(G)\subseteq \Delta - \{i\}$ and so $j_{span}(G)$ is a non-empty proper subset of $\Delta$ that is a sink, contradicting the assumption that $\Delta$ was minimal. Hence, $G_{\Delta}$ must be strongly connected.

The proof for sources is obtained considering the graph $G'$, obtained by switching the orientation of all the arcs of $G$ (which transforms sinks in sources and vice-versa, but keeps the same induced graphs strongly connected)
\end{proof}

The relevance of this to our problem is that when the confidence rule has only one component, the condition that we found for the existence of a coexistence fixed point can be rephrased as saying that the set of all nodes is a minimal source. This theorem shows then that this is equivalent to saying the confidence rule is strongly connected, which makes more easy to see what the result for many components is.

\section{Topology and matrix theory theorems}
\label{ap:theorems}

\begin{theorem}[Poincar\'e-Hopf]
Let $\mathcal{M}$ be a compact, orientable and differentiable manifold and let $\vec{F}$ be a vector field defined in $\mathcal{M}$, such that it has only isolated zeros (every zero has an open neighbourhood in which it is unique). If either $\mathcal{M}$ has no border or if $\vec{F}$ points outwards (acording to the orientation of $\mathcal{M}$) along all points of the border, then the sum of the indices of all the zeros of $\vec{F}$ in the interior of $\mathcal{M}$ equals the Euler characteristic of $\mathcal{M}$.
\end{theorem}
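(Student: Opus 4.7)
The plan is to prove the theorem in three stages: first, establish that the index is a well-defined integer attached to each isolated zero; second, show that the total index $I(\vec{F})\equiv\sum_p \mathrm{ind}(\vec{F},p)$ is independent of the admissible vector field and hence a topological invariant of $\mathcal{M}$; and third, evaluate $I$ on a well-chosen vector field and identify it with $\chi(\mathcal{M})$. The first stage is essentially definitional: at an isolated zero $p$, fix an oriented chart $\varphi:U\to\mathbb{R}^n$ sending $p$ to the origin and, for $\epsilon$ small, consider the map $S^{n-1}_\epsilon\to S^{n-1}$ given by $x\mapsto(\varphi_*\vec{F})(x)/\|(\varphi_*\vec{F})(x)\|$; its Brouwer degree is $\mathrm{ind}(\vec{F},p)$, and a short argument using that any orientation-preserving linear automorphism of $\mathbb{R}^n$ is isotopic to the identity through invertible maps shows this degree is independent of the chart.

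For the invariance stage I would follow the classical Milnor approach: by the Whitney embedding theorem, embed $\mathcal{M}\hookrightarrow\mathbb{R}^N$, take a closed tubular neighborhood $T_\epsilon$, and define the Gauss map $g:\partial T_\epsilon\to S^{N-1}$ by the outward unit normal. The key lemma is a double calculation of $\deg(g)$: extending any admissible $\vec{F}$ through the tubular neighborhood and normalizing on $\partial T_\epsilon$ produces a map homotopic to $g$ whose total degree equals $I(\vec{F})$, so $I(\vec{F})=\deg(g)$ depends only on $\mathcal{M}$; a separate argument (for example, counting critical points of a generic height function on $\mathcal{M}$) identifies $\deg(g)$ with $\chi(\mathcal{M})$. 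The outward-pointing boundary hypothesis is used here to prevent zeros from escaping $\partial\mathcal{M}$ during the extension and to make the local-degree bookkeeping comparable to the global Gauss-map degree. Alternatively, one can link two admissible fields by a generic homotopy $\vec{F}_t$ and track the births and deaths of zeros via Sard's theorem applied to the zero locus of the associated section of $T\mathcal{M}\times[0,1]$, checking that the signed count of indices is preserved at each elementary bifurcation.

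For the final stage, take $\vec{F}=\pm\nabla f$ for a Morse function $f$, with sign and boundary behaviour of $f$ arranged so that $\vec{F}$ points outward on $\partial\mathcal{M}$; a Hessian computation at a critical point of Morse index $k$ yields $\mathrm{ind}(\vec{F},p)=(-1)^k$ (up to the global sign choice), so $I(\vec{F})=\sum_k(-1)^k c_k$ where $c_k$ counts index-$k$ critical points, and the standard Morse-theoretic CW decomposition of $\mathcal{M}$ (one $k$-cell per index-$k$ critical point) gives $\sum_k(-1)^k c_k=\sum_k(-1)^k b_k=\chi(\mathcal{M})$. The hard part will be the invariance step: keeping the index bookkeeping consistent as $\vec{F}$ varies requires either the tubular-neighborhood/Gauss-map framework (which demands care in matching orientations and normalizations) or a genericity argument about the zero sets of smooth families of vector fields, and the outward boundary condition is essential precisely at this step, as it prevents any zero from migrating across $\partial\mathcal{M}$ during the homotopy.
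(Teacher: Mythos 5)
The paper does not prove this statement: Poincar\'e--Hopf is listed in Appendix~B as one of several classical theorems that are imported without proof and then applied (in Appendix~E) to the truncated simplices $M_{\epsilon}$. So there is no ``paper proof'' to compare against; what you have written is a sketch of the standard Milnor-style argument, and it is correct in its essentials: the local degree definition of the index and its chart-independence, the invariance of the total index via the tubular-neighborhood/Gauss-map computation (or a generic homotopy of sections), and the evaluation on a Morse gradient field giving $\sum_k(-1)^k c_k=\chi(\mathcal{M})$. Two points deserve more care if this were to be written out in full. First, the case the paper actually needs is the manifold-\emph{with}-boundary case with $\vec{F}$ pointing outward, and your sketch treats this mostly by side remarks; the cleanest route there is to observe that ``pointing outward along $\partial\mathcal{M}$'' is a convex, open condition, so any two admissible fields are joined by a straight-line homotopy that is nonvanishing in a collar of the boundary, which confines all zeros to a fixed compact subset of the interior and makes the total index a homotopy invariant, after which a boundary-adapted Morse function (gradient outward on $\partial\mathcal{M}$, no critical points on $\partial\mathcal{M}$) yields the CW decomposition and hence $\chi(\mathcal{M})$. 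Second, the invariance step is, as you say yourself, where all the real work lives --- matching orientations between the local degrees and the global Gauss-map degree, and verifying that the extension of $\vec{F}$ through the tube introduces no new zeros --- and at present it is asserted rather than carried out. As an outline of a known classical proof this is fine; as a self-contained proof it is incomplete at exactly that step.
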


\begin{theorem}[Gershgorin]
Let $M\in \mathbb{M}_n(\mathbb{C})$ be a square matrix whose general term is $m_{i,j}$. So if $\lambda$ is an eigenvalue of $M$, then there exists an $i$ such that
\[
|\lambda - m_{i,i}| \leq \sum_{j\neq i} |m_{i,j}|.
\]
\end{theorem}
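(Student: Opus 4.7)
The plan is to prove Gershgorin's theorem by a direct eigenvector argument, using the classical trick of focusing on the coordinate of largest modulus. Since $\lambda$ is assumed to be an eigenvalue, there exists a nonzero vector $v \in \mathbb{C}^n$ with $Mv = \lambda v$. My first step would be to pick an index $i \in \{1, \ldots, n\}$ such that $|v_i| = \max_j |v_j|$; because $v \neq 0$, this maximum is strictly positive, so $|v_i| > 0$ and we are allowed to divide by it later.

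Next, I would write out the $i$-th row of the equation $Mv = \lambda v$ explicitly, namely $\sum_{j} m_{i,j} v_j = \lambda v_i$, and isolate the diagonal contribution to get
\[
(\lambda - m_{i,i}) v_i = \sum_{j \neq i} m_{i,j} v_j.
\]
Taking moduli and applying the triangle inequality yields
\[
|\lambda - m_{i,i}|\, |v_i| \leq \sum_{j \neq i} |m_{i,j}|\, |v_j|.
\]
Dividing through by $|v_i| > 0$ and using $|v_j|/|v_i| \leq 1$ for every $j$ (which is exactly why the choice of $i$ maximizing $|v_i|$ was crucial) gives the desired bound $|\lambda - m_{i,i}| \leq \sum_{j \neq i} |m_{i,j}|$.

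There is not really a hard step here; the whole argument is essentially two lines once one has the right setup. The only delicate point worth stating carefully is why $|v_i|$ can be taken strictly positive and why the ratio bound $|v_j|/|v_i| \leq 1$ works for every $j \neq i$ simultaneously, both of which rest on the single choice of $i$ as an argmax of $j \mapsto |v_j|$. No compactness, spectral theorem, or characteristic polynomial machinery is needed, which is what makes the theorem so useful as a cheap localisation tool for the eigenvalues of the Jacobians appearing earlier in the paper.
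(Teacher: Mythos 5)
Your proof is correct and complete: the argmax-coordinate argument, the row equation $(\lambda - m_{i,i})v_i = \sum_{j\neq i} m_{i,j}v_j$, the triangle inequality, and the division by $|v_i|>0$ are exactly the standard proof of Gershgorin's circle theorem. The paper itself states this result in its appendix of auxiliary theorems without proof, treating it as classical background, so there is no in-paper argument to compare against; your write-up supplies the canonical one with nothing missing.
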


\begin{theorem}[Levy-Desplanques]
Let $M\in \mathbb{M}_n(\mathbb{C})$ be an irreducible square matrix whose general term is $m_{i,j}$. If
\[
|m_{i,i}| \geq \sum_{j\neq i} |m_{i,j}|\,\forall\, i
\]
and there exists an $i$ such that
\[
|m_{i,i}| > \sum_{j\neq i} |m_{i,j}|,
\]
then $\det (M) \neq 0$.
\end{theorem}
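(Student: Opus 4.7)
The plan is to argue by contradiction. Suppose $M$ is singular; then there exists a nonzero vector $x=(x_1,\ldots,x_n)\in\mathbb{C}^n$ with $Mx=0$, so for each $i$ we have $m_{i,i}x_i = -\sum_{j\neq i} m_{i,j}x_j$, and the triangle inequality gives
\[
|m_{i,i}|\,|x_i| \leq \sum_{j\neq i}|m_{i,j}|\,|x_j|.
\]
The strategy is to track the index set where $|x_i|$ attains its maximum and to use irreducibility to propagate this property through all indices.

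First I would set $\|x\|_\infty = \max_k|x_k|>0$ and $S = \{\,i : |x_i|=\|x\|_\infty\,\}$, which is nonempty. For each $i\in S$ the displayed inequality combined with weak diagonal dominance gives
\[
|m_{i,i}|\,\|x\|_\infty \leq \sum_{j\neq i}|m_{i,j}|\,|x_j| \leq \sum_{j\neq i}|m_{i,j}|\,\|x\|_\infty \leq |m_{i,i}|\,\|x\|_\infty,
\]
so equality must hold throughout. Equality in the middle step forces $|x_j|=\|x\|_\infty$ for every $j$ with $m_{i,j}\neq 0$, that is, every out-neighbour of $i$ in the digraph of nonzero off-diagonal entries of $M$ lies in $S$.

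Next I would invoke irreducibility, in its standard graph-theoretic form: irreducibility of $M$ is equivalent to strong connectedness of the digraph whose arcs are the pairs $(i,j)$ with $i\neq j$ and $m_{i,j}\neq 0$. Since $S$ is nonempty and closed under following arcs, and strong connectedness means every node can be reached from any starting node along arcs, we conclude $S=\{1,\ldots,n\}$. Finally I would use the hypothesis that strict dominance holds at some index $i_0$: since $i_0\in S$, the chain of inequalities above specialised to $i_0$ must be strict in the last step, contradicting the equality just derived. Hence no nonzero $x$ in the kernel of $M$ exists and $\det(M)\neq 0$.

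The main obstacle I anticipate is not the algebraic manipulation, which is essentially a sharp triangle inequality, but rather the clean formulation of the propagation argument: one must carefully argue that the set $S$ is forward-closed under the digraph of $M$, and that strong connectedness (as given by irreducibility) then forces $S$ to be the full index set. The rest is a direct consequence of the strict inequality hypothesis at $i_0$.
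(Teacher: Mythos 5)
Your proof is correct and complete. Note that the paper itself states this theorem without proof, quoting it as a known result from matrix theory (it is invoked in the appendix on the spectrum of the Jacobian to show that some principal submatrix $A^{(\sigma)}$ is positive definite), so there is no argument in the paper to compare yours against. What you give is the standard argument: for a putative nonzero kernel vector $x$, the set $S$ of indices attaining $\|x\|_\infty$ is forward-closed in the digraph of nonzero off-diagonal entries, irreducibility (equivalently, strong connectedness of that digraph) then forces $S=\{1,\dots,n\}$, and the row with strict dominance produces the contradiction. The one delicate step — that equality in $\sum_{j\neq i}|m_{i,j}|\,|x_j|\leq\sum_{j\neq i}|m_{i,j}|\,\|x\|_\infty$ forces $|x_j|=\|x\|_\infty$ precisely on the out-neighbours of $i$, which is what makes $S$ forward-closed — is handled correctly, and you rightly identify it as the crux. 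As a historical aside, the name L\'evy--Desplanques usually attaches to the strictly diagonally dominant case; the irreducible, weakly dominant version stated here is more often credited to Taussky, but the statement and your proof are standard either way.
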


\begin{theorem}
Let $M\in \mathbb{M}_n(\mathbb{C})$ be a symmetrical irreducible square matrix whose general term is $m_{i,j}$. If $m_{i,i} \neq 0$ for some $i$, then for all $k$ such that $1\leq k < n$, there exists an irreducible principal submatrix of $M$ with order $k$.
\end{theorem}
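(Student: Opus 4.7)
The plan is to translate the matrix statement into graph theory and then argue combinatorially. Since $M$ is symmetric, I would associate to it the undirected graph $G$ on vertex set $\{1,\ldots,n\}$ with edges $\{i,j\}$ whenever $m_{i,j}\neq 0$ and $i\neq j$, noting two standard equivalences: (i) irreducibility of $M$ is equivalent to connectedness of $G$, and (ii) for $S\subseteq\{1,\ldots,n\}$, the principal submatrix $M[S,S]$ is irreducible iff the induced subgraph $G[S]$ is connected, with the convention (consistent with the usage in the earlier Levy--Desplanques theorem) that a $1\times 1$ matrix is irreducible precisely when its single entry is nonzero.

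First, I would use the hypothesis to fix a vertex $v$ with $m_{v,v}\neq 0$, and, since $G$ is connected, choose a spanning tree $T$ of $G$. The heart of the argument is a leaf-peeling procedure: produce a descending chain of subtrees $T=T_n\supset T_{n-1}\supset\cdots\supset T_1=\{v\}$, each $T_k$ having exactly $k$ vertices and containing $v$. Starting from $T_n:=T$, while the current tree has more than one vertex I remove a leaf different from $v$. Such a leaf always exists because any tree on at least two vertices has at least two leaves, so at least one of them differs from $v$. Setting $S_k:=V(T_k)$ yields a nested family $\{v\}=S_1\subset S_2\subset\cdots\subset S_n$ with $|S_k|=k$.

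The final step is verification: for each $k$ with $1\le k<n$, the induced subgraph $G[S_k]$ contains the subtree $T_k$ and is therefore connected, so by (ii) the principal submatrix $M[S_k,S_k]$ is irreducible; in the base case $k=1$ the choice $S_1=\{v\}$ together with $m_{v,v}\neq 0$ gives an irreducible $1\times 1$ principal submatrix.

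The main obstacle is conceptual rather than technical. The only real content is making sure the leaf-peeling can always avoid the distinguished vertex $v$ (so that we reach size $1$ at the vertex where the diagonal entry is nonzero) and that ``irreducible principal submatrix'' is correctly captured by ``connected induced subgraph'' in every regime of $k$; both reduce to standard facts about trees and the graph-theoretic characterisation of irreducibility, so no computation is required.
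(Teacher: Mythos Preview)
The paper does not provide its own proof of this statement; it is listed in Appendix~\ref{ap:theorems} alongside the Poincar\'e--Hopf, Gershgorin, and Levy--Desplanques theorems as a known result stated without proof, and is invoked only once (with $k=n-1$) in Appendix~\ref{ap:J-Delta}. So there is nothing in the paper to compare against.

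Your argument is correct and is essentially the standard one. The graph-theoretic translation (irreducibility of a symmetric matrix $\leftrightarrow$ connectedness of its associated undirected graph, and likewise for principal submatrices versus induced subgraphs) is classical, and the leaf-peeling of a spanning tree gives a nested chain of connected vertex sets of every size from $n$ down to $1$; the fact that every tree on at least two vertices has at least two leaves is exactly what guarantees you can always avoid deleting the distinguished vertex $v$. The only genuine subtlety is the $k=1$ case, which hinges on the convention adopted for irreducibility of $1\times 1$ matrices; by anchoring the peeling at a vertex $v$ with $m_{v,v}\neq 0$, you ensure the resulting $1\times 1$ principal submatrix is irreducible under either common convention, and this is precisely where the diagonal hypothesis enters. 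For $k\geq 2$ the hypothesis $m_{i,i}\neq 0$ plays no role, which is consistent with your construction.
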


The next theorem is a strengthening of a theorem found in \cite{Johnson-1974}.

\begin{theorem}
Let $M\in \mathbb{M}_n(\mathbb{C})$ be a square matrix and let its hermitian part be $H = \nicefrac{(M + M^{\dagger})}{2}$. If $H$ is positive semidefinite, with the multiplicity of 0 equal to $\mu$, then for all $T\in \mathbb{M}_n(\mathbb{C})$, such that $T$ is hermitian positive definite, then $MT$ (and $TM$) is positive semidefinite and the sum of the geometric multiplicities of its eigenvalues with null real part is smaller or equal than $\mu$.
\end{theorem}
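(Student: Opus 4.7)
The plan is to reduce the statement about $MT$ to a statement about a matrix whose hermitian part we can control directly. Since $T$ is hermitian positive definite, it admits a unique hermitian positive definite square root $S$ with $S^{2}=T$ and $S^{-1}$ exists. Then the congruence-similarity trick gives
\begin{equation*}
S(MT)S^{-1}=SMS^{2}S^{-1}=SMS,
\end{equation*}
so $MT$ and $SMS$ have the same spectrum (including algebraic and geometric multiplicities). The same argument applied to $TM=S^{2}M$ conjugated by $S^{-1}$ gives $S^{-1}(TM)S=SMS$ as well, so I can work with the single matrix $A\equiv SMS$ throughout.

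Next I would compute the hermitian part of $A$. Since $S=S^{\dagger}$,
\begin{equation*}
\tfrac{1}{2}\bigl(A+A^{\dagger}\bigr)=\tfrac{1}{2}\bigl(SMS+SM^{\dagger}S\bigr)=S\,H\,S.
\end{equation*}
Because $H$ is positive semidefinite and $S$ is invertible and hermitian, $SHS$ is positive semidefinite with the same rank as $H$, hence its kernel has dimension exactly $\mu$. A standard Rayleigh-type argument now yields the first claim: if $Av=\lambda v$ with $v\neq 0$, then $v^{\dagger}Av=\lambda|v|^{2}$, and taking real parts gives $\mathrm{Re}(\lambda)|v|^{2}=v^{\dagger}(SHS)v\geq 0$, so all eigenvalues of $A$ (and therefore of $MT$ and $TM$) have non-negative real part.

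For the second claim, I would sharpen this argument along the purely imaginary spectrum. Suppose $\lambda=i\beta$ is an eigenvalue of $A$ with eigenspace $V_{\lambda}$. For any $v\in V_{\lambda}$, $v^{\dagger}Av=i\beta|v|^{2}$ is purely imaginary, so $v^{\dagger}(SHS)v=0$; since $SHS$ is positive semidefinite this forces $SHSv=0$, i.e.\ $V_{\lambda}\subseteq\ker(SHS)$. Eigenspaces associated to distinct eigenvalues are linearly independent, so the sum of the geometric multiplicities of all eigenvalues with null real part is bounded by $\dim\ker(SHS)=\mu$.

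The main obstacle, such as it is, lies in noticing that the natural object to study is not $MT$ directly (whose hermitian part is $\tfrac{1}{2}(MT+TM^{\dagger})$ and has no obvious positivity) but its similarity transform $SMS$, whose hermitian part inherits positivity from $H$ in a clean way via congruence. Once that similarity is in hand, the inequality $\mathrm{Re}(v^{\dagger}Av)\geq 0$ and the standard fact that $v^{\dagger}Bv=0$ implies $Bv=0$ for positive semidefinite $B$ do all the work, and no appeal to the irreducibility-type hypotheses of the preceding Gershgorin/Levy--Desplanques lemmas is needed.
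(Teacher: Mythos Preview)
Your proof is correct and follows essentially the same route as the paper's: both reduce the question about $MT$ to a similar matrix $S^{\dagger}MS$ whose hermitian part is the congruence $S^{\dagger}HS$, then use the Rayleigh identity $\mathrm{Re}(v^{\dagger}Av)=v^{\dagger}(S^{\dagger}HS)v$ and the fact that a positive semidefinite quadratic form vanishes only on its kernel. The only cosmetic difference is that the paper takes a general Cholesky factor $T=SS^{\dagger}$ and argues via the substitution $x=Sw$ back to $x^{\dagger}Hx$, whereas you specialize to the hermitian square root $S=S^{\dagger}$ and compute the hermitian part of $SMS$ directly; since $\ker(SHS)=S^{-1}\ker H$ these are the same subspace $W$ of dimension $\mu$, and the two arguments coincide line for line.
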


\begin{proof}
By our hypothesis, the eigenvalues of $H$ are non-negative real numbers and its eigenvectors can be arranged as an orthonormal basis. We can split this basis in 2 parts, $\{u_i\}$, with the eigenvectors with eigenvalue 0 and $\{v_i\}$, for the others. Define $\lambda_i > 0$, the eigenvalue such that $Hv_i = \lambda_iv_i$ and define $U$, the linear span of $\{u_i\}$. Let $x$ be a column vector and $x^{\dag}$ its conjugate transpose, so
\[
x' = \sum_i \alpha_iu_i,\quad x'' = \sum_j \beta_jv_j,\quad\mathrm{and}\quad x=x'+x'' \Rightarrow
\]\[
x^{\dag} Hx = (x'^{\dag} +x''^{\dag})H(x'+x'') =
\]\[
=(Hx')^{\dag}(x' + x'') + x''^{\dag} (Hx') + x''^{\dag} Hx'' = x''^{\dag} Hx'' = 
\]\[
= \sum_{i,j} \beta_i^* \beta_j v^{\dag}_i Hv_j = \sum_{i,j} \beta_i^* \beta_j \lambda_j \delta_{i,j} = \sum_{i} \left|\beta_i\right|^2\lambda_i.
\]
\noindent Hence, $x^{\dag} Hx > 0\Leftrightarrow x \notin U$ . On the other hand $2\mathrm{Re}(x^{\dag} Mx) = x^{\dag} Mx + (x^{\dag} Mx)^* = x^{\dag}(M + M^{\dag})x = 2x^{\dag} Hx$. Let $S$ be a nonsingular matrix and $w$, a normalized eigenvector of $S^{\dag} MS$, with eigenvalue $\gamma$. Taking $x = Sw$, it follows
\[
\mathrm{Re}(\gamma) = \mathrm{Re}(\gamma w^{\dag} w) =
\]\[
= \mathrm{Re}(w^{\dag} S^{\dag} MSw) = \mathrm{Re}(x^{\dag} Mx) = x^{\dag}Hx.
\]
\noindent Define $W = S^{-1}U = \{y\in W \Leftrightarrow Sy \in U\}$, so $\mathrm{Re}(\gamma)>0 \Leftrightarrow w\notin W$. As the dimension of $U$ is $\mu$, it follows that $W$ also has dimension $\mu$. The sum $\sigma$ of the geometric multiplicities of the eigenvalues of $S^{\dag} MS$ with null real part is the dimension of the linear span, $N$, of the corresponding eigenvectors. As all these eigenvectors belong to $W$ and $W$ is a linear subspace, then it follows that $N$ is a subspace of $W$ and hence $\sigma \leq \mu$.

All the properties of the spectrum of a matrix (including algebraic and geometric multiplicities) are encoded in its Jordan canonical form, and this form is invariant by similarity transformations, so $S^{\dag} MS$, $MSS^{\dag}$ and $SS^{\dag}M$ have the same spectral properties. This proves the theorem, as any hermitian positive definite matrix $T$ can be written as $SS^{\dag}$, with a non-singular $S$ using a Cholesky decomposition.
\end{proof}

\section{Spectrum of the jacobian for a coexistence point in a rule with only one component}
\label{ap:J-Delta}

Let $\vec{\eta}\,^*$ be a coexistence fixed point (that is, all opinions survive) in a model with a rule that has only one component and at least 2 opinions. We recall that $\mathcal{J}^*\vec{\eta}\,^* = \vec{0}$ (due to homogeneity) and that if $\vec{1} = (1,\ldots, 1)$, then $\vec{1}\mathcal{J}^* = \vec{0}$ (this follows from the conservation of the sum of the variables). Let $D$ be the diagonal matrix whose diagonal terms are the coordinates of $\vec{\eta}\,^*$ (in other words $D = \mathrm{diag}(\vec{\eta}\,^*)$), then the symmetric matrix $A$, defined as

\begin{equation}
A = \mathcal{J}^*D + (\mathcal{J}^*D)^T
\label{eq:def-A}
\end{equation}
has off-diagonal terms given by

\begin{equation}
A_{\sigma,\sigma'} = r(1-q)(\eta_{\sigma}^{*q}\eta_{\sigma'}^*p_{\sigma'\rightarrow\sigma} + \eta_{\sigma}^{*}\eta_{\sigma'}^{*q}p_{\sigma\rightarrow\sigma'}) \leq 0
\label{eq:A-term}
\end{equation}
and each of the rows (columns) of $A$ sum 0. Moreover, as the confidence rule has only one component, $A$ is irreducible, implying that at least one off-diagonal term in each row is different from 0 and hence all the diagonal terms are positive. Finaly, we can use these informations to apply Gershgorin's theorem (appendix \ref{ap:theorems}) and find that $A$ is positive semidefinite.

Denote the principal submatrix of $A$, obtained by removing row and column $\sigma$ by $A^{(\sigma)}$. If $A_{\sigma,\sigma'} \neq 0$ is an off-diagonal term from $A$, both $A^{(\sigma)}$ and $A^{(\sigma')}$ are such that one of the rows has a positive sum. As every row has an off-diagonal term different from 0, then all principal submatrices of $A$ with order $M-1$ have at least one row that has a positive sum. Applying Gershgorin's theorem again, we find that all the $A^{(\sigma)}$ are positive semidefinite. Moreover, as $A$ is irreducible, then there exists $\sigma$ such that $A^{(\sigma)}$ is also irreducible and by the Levy-Desplanques theorem (appendix \ref{ap:theorems}), $A^{(\sigma)}$ must be positive definite. Putting everything together, we find that

\[
\det(A) = 0 \,\,\mbox{and}\,\,\sum_{\sigma}\det(A^{(\sigma)}) > 0,
\]
meaning that 0 is an eigenvalue of $A$ with algebraic multiplicity 1. As $A$ is also hermitian, then all the other eigenvalues must be real and positive.

As $A^{(\sigma)}$ can be written as $A^{(\sigma)} = \mathcal{J}^{*(\sigma)}D^{(\sigma)} + (\mathcal{J}^{*(\sigma)}D^{(\sigma)})^T$ (because $D$ is diagonal), we can apply our theorem from appendix \ref{ap:theorems} to $A^{(\sigma)}$ to find that all the $\mathcal{J}^{*(\sigma)}$ are positive semidefinite and at least one $\mathcal{J}^{*(\sigma)}$ is positive definite. As these matrices are real, this implies that $\sum_{\sigma} \det(\mathcal{J}^{*(\sigma)}) > 0$. Finaly, we can apply this same theorem to $A$ to find that $\mathcal{J}^*$ is positive semidefinite and the sum of the geometric multiplicities of all eigenvalues with null real part is at most one, implying there is at most one eigenvalue with null real part. As $\mathcal{J}^{*}.\vec{\eta}\,^* = \vec{0}$, it follows that 0 is the only eigenvalue of $\mathcal{J}^{*}$ with null real part. Moreover, as $\sum_{\sigma} \det(\mathcal{J}^{*(\sigma)}) > 0$ it has algebraic multiplicity 1. As we also have $\vec{1}\mathcal{J}^* = \vec{0}$, then $\vec{1}$ is a left eigenvector with eigenvalue 0, implying that if $\vec{v}$ is a right eigenvector with eigenvalue different from $0$, then $\vec{1}.\vec{v} = 0$.

Putting these results together, all the eigenvalues have positive real part and the corresponding eigenvectors are parallel to the phase space, with the exception of the eigenvector $\vec{\eta}\,^*$, that is not parallel and has eigenvalue 0 (with multiplicity 1).

\section{High order stability analysis for fixed points in which opinions get extinct}
\label{ap:last-order}

Suppose that we have a fixed point in which only opinions in $\Delta$ survive and let $\Omega = \overline{\Delta}$. For each $\sigma\in\Omega$, we define $\lambda_{\sigma}$ as

\[
\lambda_{\sigma} = -r\sum_{\sigma'\in\Delta}\eta^{*q}_{\sigma'}p_{\sigma\rightarrow\sigma'} \leq 0.
\]

Suppose that either $\lambda_{\sigma} < 0$ or $\sigma\notin\Delta_{+}$ for all opinions in $\Omega$ and let $\omega$ be the set of opinions such that $\lambda_{\sigma} = 0$,

\[
\Lambda \equiv \max_{\sigma\in\Omega-\omega} \lambda_{\sigma} < 0\,\,\mbox{and}\,\,\eta_{\omega} \equiv \sum_{\sigma\in\omega}\eta_{\sigma}.
\]
It follows from the first order analysis we did in section \ref{ssec:stability}, that if $\sigma\in\Omega-\omega$ and the initial value of $\eta_{\sigma}$, $\eta_{\sigma \oo}$ is sufficiently close to 0, then $\eta_{\sigma}$ evolves as

\[
\eta_{\sigma}(t) = \eta_{\sigma \oo}e^{\lambda_{\sigma}t},
\]
as long as all opinions in $\Omega$ remain negligible. It follows from the mean field equations that

\begin{equation}
\dot{\eta}_{\omega} = r\sum_{\sigma\in\omega}\sum_{\sigma'\in\Omega-\omega} \eta_{\sigma}\eta_{\sigma'}(\eta_{\sigma}^{q-1}p_{\sigma'\rightarrow\sigma} - \eta_{\sigma'}^{q-1}p_{\sigma\rightarrow\sigma'}).
\label{eq:eta-omega}
\end{equation}
So if $\eta_{\omega}$ is sufficiently close to 0, it evolves as

\[
\eta_{\sigma} \simeq \eta_{\sigma \oo}e^{\lambda_{\sigma} t}\,\,\forall\,\,\sigma\in\Omega-\omega\Rightarrow
\]\[
\dot{\eta}_{\omega} \simeq r\sum_{\sigma\in\omega}\sum_{\sigma'\in\Omega-\omega} (\eta_{\sigma' \oo}e^{\lambda_{\sigma'} t}\eta_{\sigma}^q p_{\sigma'\rightarrow\sigma} - \eta_{\sigma' \oo}^qe^{q\lambda_{\sigma'} t}\eta_{\sigma} p_{\sigma\rightarrow\sigma'})
\]
yielding the following inequalities

\[
-r\sum_{\sigma\in\omega}\sum_{\sigma'\in\Omega-\omega}\eta_{\sigma' \oo}^qe^{q\lambda_{\sigma'} t}\eta_{\sigma}p_{\sigma\rightarrow\sigma'} \leq \dot{\eta}_{\omega} \leq
\]\[
\leq r\sum_{\sigma\in\omega}\sum_{\sigma'\in\Omega-\omega}\eta_{\sigma' \oo}e^{\lambda_{\sigma'} t}\eta_{\sigma}^q p_{\sigma'\rightarrow\sigma} \Rightarrow
\]\[
-r\sum_{\sigma\in\omega}\sum_{\sigma'\in\Omega-\omega}e^{q\lambda_{\sigma'} t}\eta_{\sigma} \leq \dot{\eta}_{\omega} \leq r\sum_{\sigma\in\omega}\sum_{\sigma'\in\Omega-\omega}e^{\lambda_{\sigma'} t}\eta_{\sigma} \Rightarrow
\]\[
-r\sum_{\sigma\in\omega}\sum_{\sigma'\in\Omega-\omega}e^{q\Lambda t}\eta_{\sigma} \leq \dot{\eta}_{\omega} \leq r\sum_{\sigma\in\omega}\sum_{\sigma'\in\Omega-\omega}e^{\Lambda t}\eta_{\sigma} \Rightarrow
\]\[
-r\vert\Omega-\omega\vert e^{q\Lambda t}\eta_{\omega} \leq \dot{\eta}_{\omega} \leq r\vert\Omega-\omega\vert e^{\Lambda t}\eta_{\omega} \Rightarrow
\]\[
-r\vert\Omega-\omega\vert e^{q\Lambda t} \leq \frac{\mathrm{d}}{\mathrm{dt}}\ln({\eta}_{\omega}) \leq r\vert\Omega-\omega\vert e^{\Lambda t}.
\]
Integrating in time and taking the limit $t\rightarrow\infty$ gives the inequalities \ref{eq:assymptotic-degeneracy}:

\begin{equation}
\eta_{\omega\oo} e^{\nicefrac{\vert\Omega-\omega\vert r}{q\Lambda}} \leq \eta_{\omega} \leq \eta_{\omega\oo}e^{-\nicefrac{\vert\Omega-\omega\vert r}{\Lambda}},
\end{equation}
and so trajectories are neither attracted to nor repelled from $\mathcal{M}_{\Delta}$. This ensures that the whole reasoning is consistent, as it is always possible to make $\eta_{\omega\oo}$ sufficiently small, so that the hypothesis of small $\eta_{\omega}$ always holds.

\section{Applying the Poincar\'e-Hopf theorem to the case of a confidence rule with complete directed skeleton}
\label{ap:TPH}

Consider a rule with a skeleton corresponding to a complete directed graph and define

\[
p = \min_{\sigma\neq\sigma'} \{p_{\sigma\rightarrow\sigma'}\}.
\]

In order to apply the Poincar\'e-Hopf theorem, we build a family of manifolds that includes the phase space:

\[
M_{\epsilon} = \left\{\vec{\eta}\in Sim_M \Big| \eta_{\sigma} \geq \epsilon \,\,\forall\,\,\sigma\right\}.
\]
These manifolds all satisfy the hypothesis of the theorem and the borders of $M_{\epsilon}$ are given by the facets $\eta_{\sigma} = \epsilon$ (that is, we are using $M$ dimensions to define our manifolds, but we are embedding them in $M-1$ dimensions). The fixed points we obtain for the flow in the mean field equation are not isolated when we look at the problem in $M$ dimensions (because of the homogeneity of the equations), but our results about the Jacobian show that embedding the phase space in $M-1$ dimensions instead of $M$ is enough to isolate the zeros (this follows from applying the implicit function theorem. Another way of isolating the zeros would be to add a term in the equation that is 0 inside the phase space, but is different from 0 outside, but this has the downside of making the hypothesis to be checked more complicated). It also follows from the spectrum of this jacobian that the indices of any fixed points in the interior of any of the manifolds $M_{\epsilon}$ would be 1.

The last hypothesis to be checked is then that the vector field $\vec{F}$ points outside along the border. In the manifold $M_{\epsilon}$, this is equivalent to the following statement:

\begin{equation}
\eta_{\sigma} = \epsilon \Rightarrow F_{\sigma} = r\epsilon^q\sum_{\sigma'}\eta_{\sigma'}p_{\sigma'\rightarrow\sigma} - r\epsilon\sum_{\sigma'}\eta_{\sigma'}^qp_{\sigma\rightarrow\sigma'} < 0.\Rightarrow
\label{eq:tph-condition}
\end{equation}

\[
\epsilon^{q-1}\sum_{\sigma'}\eta_{\sigma'}p_{\sigma'\rightarrow\sigma} < \sum_{\sigma'}\eta_{\sigma'}^qp_{\sigma\rightarrow\sigma'}.
\]
As the left hand side is smaller than $\epsilon^{q-1}$ and the right hand side is greater than $\nicefrac{p}{M^{q-1}}$, then it suffices to take

\begin{equation}
\epsilon < \frac{p^{\nicefrac{1}{(q-1)}}}{M}
\label{eq:tph-condition2}
\end{equation}
in order to get a manifold $M_{\epsilon}$ such that we can apply the theorem.

The Euler characteristic of all of the $M_{\epsilon}$ is 1, meaning that if we can apply the theorem there exists exactly one fixed point in its interior. Together with equation \ref{eq:tph-condition2}, this means that there exists exactly one coexistence fixed point and it obeys

\[
\eta_{\sigma} \geq \frac{p^{\nicefrac{1}{(q-1)}}}{M}\,\,\forall\,\,\sigma.
\]

\bibliography{andre}
\bibliographystyle{plain}
\end{document}